\newtheorem{theorem}{Theorem}
\newtheorem{lemma}[theorem]{Lemma}
\theoremstyle{definition}
\newtheorem{remark}[theorem]{Remark}
\newtheorem{example}[theorem]{Example}
\title[The freeness problem over matrix semigroups and bounded languages]{The freeness 
problem over matrix semigroups and bounded languages}
\author[\' E. Charlier]{\'Emilie Charlier}
\author[J. Honkala]{Juha Honkala}
\address[\'Emilie Charlier]{Institut de math\'ematiques, Universit\'e de Li\`ege, Belgium}
\email[\'Emilie Charlier]{echarlier@ulg.ac.be} 
\address[Juha Honkala]{Department of Mathematics and Statistics, University of Turku, Finland}
\email[Juha Honkala]{juha.honkala@utu.fi} 
\DeclareMathOperator{\val}{val}
\DeclareMathOperator{\N}{\mathbb N}
\DeclareMathOperator{\Q}{\mathbb Q}
\DeclareMathOperator{\Z}{\mathbb Z} 
\DeclareMathOperator{\matk}{\Q^{k\times k}_{\rm uptr}}
\DeclareMathOperator{\matkZ}{\Z^{k\times k}_{\rm uptr}}
\DeclareMathOperator{\mat}{\Q^{2\times 2}_{\rm uptr}} 
\begin{document}

\begin{abstract}
We study the freeness problem for matrix semigroups. We show that the freeness
problem is decidable for upper-triangular $2\times 2$ matrices with rational
entries when the products are restricted to certain bounded languages.
\end{abstract}
\maketitle

\section{Introduction}

In this paper we study the freeness problem over matrix semigroups. In
general, if $S$ is a semigroup and $X$ is a subset of $S$, we say that $X$ is
a code if for any integers $m,n\ge 1$ and any elements
$x_1,\ldots,x_m,y_1,\ldots,y_n \in X$ the equation
\[x_1x_2\ldots x_m=y_1y_2\ldots y_n\]
implies that $m=n$ and $x_i=y_i$ for $1\le i \le m$. The freeness problem over
$S$ consists of deciding whether or not a finite subset of $S$ is a code.

The freeness problem over $S$ can also be stated as follows. Suppose $\Sigma$
is a finite nonempty alphabet and $\mu:\Sigma^+\to S$ is a morphism. Then the
freeness problem over $S$ is to decide whether or not $\mu$ is injective.

For a general introduction to freeness problems over semigroups see
\cite{Cassaigne-Nicolas}.

An interesting special case of the freeness problem concerns freeness of
matrix semigroups. Let $R$ be a semiring and let $k\ge 1$ be an integer. Then
the semiring of $k\times k$ matrices (resp. upper-triangular $k\times k$
matrices) is denoted by $R^{k\times k}$ (resp. $R^{k\times k}_{\rm uptr}$). The
sets $R^{k\times k}$ and $R^{k \times k}_{\rm uptr}$ are monoids and the
freeness problem over $R^{k\times k}$ is to decide whether or not a given
morphism
\[\mu:\Sigma^*\to R^{k\times k}\]
is injective. Most cases of this problem are undecidable. In fact, Klarner,
Birget and Satterfield \cite{Klarner-Birget-Satterfield} proved that the
freeness problem over ${\N}^{3\times 3}$ is undecidable. Cassaigne, Harju and
Karhum\"aki \cite{Cassaigne-Harju-Karhumaki} improved this result by showing
that the problem remains undecidable for ${\N}^{3\times 3}_{\rm uptr}$. Both
of these undecidability results use the Post correspondence problem. Cassaigne,
Harju and Karhum\"aki also discuss the freeness problem for $2\times 2$
matrices having rational entries. This problem is still open even for
upper-triangular $2\times 2$ matrices. For some special decidable cases of the
freeness problem for $2\times 2$ matrices see \cite{Cassaigne-Harju-Karhumaki}
and \cite{Honkala:2009}.

In this paper we discuss the problem whether or not a given morphism
$\mu:\Sigma^*\to \matk$ is injective on certain bounded languages. This
approach is inspired by the well-known fact that many language theoretic
problems which are undecidable in general become decidable when restricted to
bounded languages. Our main result is that we can decide the
injectivity of a given morphism $\mu:\{x,z_1,\ldots,z_{t+1}\}^*\to \mat$ on
the language
$L_t=z_1x^*z_2x^*z_3\ldots z_tx^*z_{t+1}$ for any $t\ge 1$, provided that the
matrices $\mu(z_i)$ are nonsingular for $1\le i\le t+1$. To prove this result
we will study the representation of rational numbers in a rational base.

On the other hand, we will show that if we consider large enough matrices the
injectivity problem becomes undecidable even if restricted to certain very special
bounded languages. Hence, contrary to the common situation in language
theory, the restriction of the freeness problem over bounded languages remains
undecidable. The proof of our undecidability result will use a reduction to
Hilbert's tenth problem in a way which is commonly used to obtain various
undecidability results for rational power seris (see \cite{Kuich-Salomaa}) and
which is also used in \cite{Bell-Halava-Harju-Karhumaki-Potapov} to study the mortality problem for
products of matrices.

\section{Results and examples}

As usual, $\Z$ and $\Q$ are the sets of integers and rational numbers. If
$k\ge 1$ is an integer, the
set of $k\times k$ matrices having integer (resp. rational) entries is denoted
by $\Z^{k\times k}$ (resp. $\Q^{k\times k}$) and the set of upper-triangular
$k\times k$ matrices is denoted by $\matkZ$ (resp. $\matk$).

We will consider two special families of bounded languages. Suppose $t\ge 1$
is a positive integer. Let
\[\Sigma_t=\{x,z_1,\ldots,z_{t+1}\}\]
be an alphabet having $t+2$ different letters and let
\[\Delta=\{x,y,z_1,z_2\}\]
be an alphabet having four different letters. Define the languages
$L_t\subseteq \Sigma_t^*$ and $K_t\subseteq \Delta^*$ by
\[ L_t=z_1x^*z_2x^*z_3\cdots z_tx^*z_{t+1}\]
and
\[K_t=z_1(x^*y)^{t-1}x^*z_2.\]

We can now state our results.

\begin{theorem}\label{the:main1}
Let $t$ be a positive integer. It is decidable whether or not a given morphism
\[\mu\colon \Sigma_t^* \to \mat\]
such that $\mu(z_i)$
is nonsingular for $i=1,\ldots,t+1$, is injective on $L_t$.
\end{theorem}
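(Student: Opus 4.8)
The plan is to reduce the equality $\mu(w)=\mu(w')$ of two words $w=z_1x^{n_1}z_2\cdots z_tx^{n_t}z_{t+1}$ and $w'=z_1x^{m_1}z_2\cdots z_tx^{m_t}z_{t+1}$ of $L_t$ to a system of equations on the parameters $(n_1,\ldots,n_t),(m_1,\ldots,m_t)\in\N^t$, and then to show that the solvability of this system is decidable from the given matrix entries. Write $\alpha,\beta,\gamma$ for the entries of $\mu(x)$ (with $\alpha,\gamma$ on the diagonal and $\beta$ above it) and $A_i,B_i,C_i$ for those of $\mu(z_i)$; nonsingularity of $\mu(z_i)$ means $A_i\neq 0\neq C_i$. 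Put $N=n_1+\cdots+n_t$, $N'=m_1+\cdots+m_t$, $S_j=n_1+\cdots+n_j$ and $S'_j=m_1+\cdots+m_j$, so $S_0=S'_0=0$, $S_t=N$, $S'_t=N'$. A direct computation with upper-triangular $2\times 2$ matrices shows that the top-left and bottom-right entries of $\mu(w)$ are $(\prod_iA_i)\alpha^{N}$ and $(\prod_iC_i)\gamma^{N}$; and that, when $\alpha$ and $\gamma$ are nonzero and distinct, the upper-right entry of $\mu(w)$ equals $\gamma^{N}\sum_{j=0}^{t}d_j\rho^{\,S_j}$, where $\rho=\alpha/\gamma$ and $d_0,\ldots,d_t$ are constants depending only on the entries of $\mu(x)$ and the $\mu(z_i)$; simpler formulas of the same flavour hold when $\alpha=\gamma$, or $\alpha=0$, or $\gamma=0$.

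The first step is to dispose of the degenerate cases. If $\alpha=0$ or $\gamma=0$ then $\mu(x)^n$ has rank at most one for every $n\geq 1$, so the entries of $\mu(w)$ are severely restricted and deciding injectivity reduces to a finite computation together with an elementary analysis of finitely many one-parameter families. If $\alpha=\gamma\neq 0$ then $\mu(x)^n$ has diagonal entries $\alpha^n$ and upper-right entry $n\alpha^{n-1}\beta$, so the upper-right entry of $\mu(w)$ is $\alpha^{N}$ times an affine function of $n_1,\ldots,n_t$; together with the constraint coming from the top-left entry, the existence of a collision is then expressed by a finite system of linear equations and congruences, which is decidable. The case $\alpha,\gamma\in\{1,-1\}$ is similar, everything depending only on $N$ and on parities.

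There remains the main case, in which $\alpha,\gamma$ are nonzero and distinct and at least one of $|\alpha|,|\gamma|$ differs from $1$; by the symmetry that exchanges the two diagonal entries of every matrix and reverses the order of the factors (which replaces $L_t$ by a language of the same form) we may assume $|\alpha|\neq 1$. Then equality of the top-left entries of $\mu(w)$ and $\mu(w')$ forces $N=N'$, the bottom-right entries agree automatically, and, after cancelling the common factor $\gamma^{N}$ together with the terms $d_0$ and $d_t\rho^{N}$ (which depend only on $N$), equality of the upper-right entries becomes
\[
\sum_{j=1}^{t-1}d_j\rho^{\,S_j}=\sum_{j=1}^{t-1}d_j\rho^{\,S'_j}.
\]
Since, once $N$ is fixed, $(n_1,\ldots,n_t)\mapsto(S_1,\ldots,S_{t-1})$ is a bijection onto the nondecreasing tuples $0\le S_1\le\cdots\le S_{t-1}\le N$, and since $N$ may be taken arbitrarily large, $\mu$ is injective on $L_t$ if and only if the map
\[
(k_1,\ldots,k_{t-1})\longmapsto\sum_{j=1}^{t-1}d_j\rho^{\,k_j}
\]
is injective on the set of integer tuples with $0\le k_1\le\cdots\le k_{t-1}$, where $\rho\in\Q\setminus\{0,1\}$; and if $|\rho|=1$, i.e. $\rho=-1$, the value depends only on the parities of the $k_j$ and injectivity is again decided elementarily, so we may also assume $|\rho|\neq 1$.

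Deciding injectivity of this last map is the heart of the argument, and I expect it to be the main obstacle. A failure of injectivity is a relation $\sum_{j=1}^{t-1}d_j\rho^{\,k_j}=\sum_{j=1}^{t-1}d_j\rho^{\,k'_j}$, which, after collecting equal exponents, becomes a vanishing sum of at most $2(t-1)$ distinct powers of $\rho$ whose coefficients lie in a fixed finite subset of $\Q$ determined by $d_1,\ldots,d_{t-1}$. Writing $\rho=p/q$ in lowest terms and analysing such a relation through the representation of rational numbers in the base $\rho$ — using bounded carry propagation together with valuation estimates at the primes dividing $p$ and those dividing $q$, and the fact that only boundedly many powers occur — one reduces the existence of such a relation to checking finitely many, explicitly bounded, possibilities. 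Combining this with the treatment of the degenerate cases produces the decision algorithm, which proves the theorem.
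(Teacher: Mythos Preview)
Your reduction is broadly along the same lines as the paper's: you compute the entries of $\mu(w)$, split off the degenerate cases, and in the main case express the upper-right entry as a linear combination of powers of $\rho=\alpha/\gamma$ indexed by the partial sums $S_j$. The paper does the analogous thing via Lemma~\ref{lem:matrix-form}, writing the upper-right entry as $\val_a$ of a word over a finite digit alphabet; your ``sparse'' expression $\gamma^N\sum_j d_j\rho^{S_j}$ is just a closed-form evaluation of that word. So up to this point the two approaches match.

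The gap is in your final paragraph, which you yourself flag as the heart of the matter. You assert that bounded carry propagation and valuation estimates reduce the existence of a collision to ``checking finitely many, explicitly bounded, possibilities''. This is not correct as stated: the exponents in a collision are \emph{not} bounded. For instance, with $\rho=2$, $d_1=2$, $d_2=-1$ one has $d_1\rho^{k_1}+d_2\rho^{k_2}=0$ whenever $k_2=k_1+1$, so collisions of the map $(k_1,k_2)\mapsto d_1\rho^{k_1}+d_2\rho^{k_2}$ occur for every pair $(k_1,k_1+1)$, and similar phenomena (with vanishing subsums) produce infinite families in general. What bounded carry actually gives is not a finite list of candidate tuples but a \emph{finite automaton}: the carry when comparing two base-$\rho$ representations over a fixed finite digit set stays in a bounded interval of integers, so the language of word-pairs with equal $\val_\rho$ is effectively regular (this is exactly Lemma~\ref{lem:regular} in the paper). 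One then has to intersect this regular language with the regular language encoding the structural constraints (here: that each component lies in the appropriate bounded language and that the two components differ), and test the intersection for emptiness. Your sketch supplies neither the automaton construction nor the intersection step, and the phrase ``finitely many, explicitly bounded, possibilities'' suggests a different and incorrect picture.

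In short, the outline is sound and close to the paper's, but the decisive step---turning bounded carry into effective regularity and then deciding emptiness of an intersection of regular languages---is asserted rather than carried out, and the way you phrase its conclusion is misleading. The paper's proof fills exactly this gap with Lemma~\ref{lem:regular} and the construction of the auxiliary languages $T_1,T_2$.
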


\begin{theorem}\label{the:main2} 
There exist two positive integers $k$ and $t$ such that there is no algorithm
to decide whether a given morphism
\[\mu\colon \Delta^* \to \Z^{k\times k}_{\rm uptr}\]
is injective on $K_t$.
\end{theorem}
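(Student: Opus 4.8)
The plan is to reduce a Diophantine problem. Fix a recursively enumerable set $W\subseteq\N$ that is not recursive; by the MRDP theorem there is an integer polynomial $Q(a,x_1,\ldots,x_r)$ with
\[W=\{a\in\N:\ \exists\,(x_1,\ldots,x_r)\in\N^r,\ Q(a,x_1,\ldots,x_r)=0\}.\]
Set $t=r+1$, so that a word of $K_t$ has the form $w_{\vec i}=z_1x^{i_1}yx^{i_2}y\cdots yx^{i_t}z_2$ and is determined by $\vec i=(i_1,\ldots,i_t)\in\N^t$. For each $a\in\N$ I would construct, effectively in $a$, a morphism $\mu_a\colon\Delta^*\to\Z^{k\times k}_{\rm uptr}$, with $k$ depending only on $Q$, such that, writing $q=Q(a,i_1,\ldots,i_{t-1})$,
\[\mu_a(w_{\vec i})=\mu_a(w_{\vec j})\iff (i_1,\ldots,i_{t-1})=(j_1,\ldots,j_{t-1})\ \text{and}\ i_tq=j_tq.\]
Granting this, $\mu_a$ is \emph{not} injective on $K_t$ if and only if there are $(i_1,\ldots,i_{t-1})\in\N^r$ and $i_t\ne j_t$ with $(i_t-j_t)q=0$, i.e.\ if and only if $Q(a,\cdot)$ has a zero in $\N^r$, i.e.\ if and only if $a\in W$. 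Since $W$ is non-recursive this shows that injectivity on $K_t$ is undecidable, proving the theorem for this $k$ and $t$.

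The construction of $\mu_a$ is the core of the argument. I would produce it by describing a $\Z$-weighted automaton reading the words of $K_t$, i.e.\ by giving the nonzero entries of the four matrices $\mu_a(x),\mu_a(y),\mu_a(z_1),\mu_a(z_2)$, running several independent tracks in parallel. There are, first, $t-1$ counting tracks: the $\ell$-th one carries an ``active'' marker that the letter $y$ moves from block $\ell$ to block $\ell+1$, and while active it copies every read $x$ into a private coordinate $c_\ell$, so that $c_\ell$ finishes holding $i_\ell$; it is crucial that $y$ \emph{closes} the marker (no $y$-loop on the block-final state) so that the later blocks of $x$'s leave $c_\ell$ untouched. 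Second, an evaluation track which, after $z_1x^{i_1}y\cdots yx^{i_{t-1}}$ and the final $y$, deposits the number $q=Q(a,i_1,\ldots,i_{t-1})$ into one coordinate $p$. Third, a multiplier track on the last block: reading $x^{i_t}$, each $x$ adds the current content of $p$ to an output coordinate, producing $i_tq$, which $\mu_a(z_2)$ then routes to a distinguished entry.

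For the evaluation track I would use the classical fact that feeding $x^n$ through a unipotent upper-triangular chain produces the binomial coefficients $\binom n0,\ldots,\binom nd$ as path weights, together with the remark that every integer polynomial is an integer combination of products of binomial coefficients of its variables; thus one small gadget per term of the binomial expansion of $Q(a,\cdot)$ suffices, the gadgets and their sizes depending only on $Q$, while the (integer) coefficients depend on $a$ but are computable from it and get absorbed into the transitions feeding $p$. Taking $\mu_a(z_1)$ to have only its first row nonzero, every $\mu_a(w_{\vec i})$ has a single nonzero row, and inspecting that row shows that it records exactly $(i_1,\ldots,i_{t-1})$, the value $q$, and $i_tq$, and no further quantity depending on $i_t$ --- which gives the displayed equivalence. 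With the states ordered as: start; then the markers and the $c_\ell$; then the polynomial gadgets block by block; then $p$, the multiplier coordinate and the output, every transition goes forward, so the four matrices are genuinely upper-triangular and $k$ is just the total number of states. The only delicate point, and the one I expect to be the main obstacle, is arranging the parallel tracks so that they really do not interfere --- in particular so that $i_t$ never reaches a matrix entry except inside the product $i_tq$, and so that $(i_1,\ldots,i_{t-1})$ and $q$ are exactly recoverable from $\mu_a(w_{\vec i})$ --- since it is precisely this non-leakage that turns equality of two images into the Diophantine equation $Q(a,i_1,\ldots,i_{t-1})=0$; the rest is routine weighted-automaton bookkeeping.
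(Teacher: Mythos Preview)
Your approach is correct and follows the same high-level plan as the paper—reduce a parametrized Diophantine problem to injectivity of $\mu_a$ on $K_t$—but differs in two places worth noting.

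First, the encoding of injectivity. The paper arranges that $\mu_a(w_{\vec i})$ has a \emph{single} nonzero entry, equal to a value $Q(a,i_2,\ldots,i_{m+1})$, and then builds a Cantor pairing polynomial into $Q$ so that $Q(a,\vec b)=Q(a,\vec c)$ with $\vec b\ne\vec c$ is possible iff the original polynomial $P(a,\cdot)$ has a zero (their Lemma on $P$ versus $Q$). You instead let the first row of $\mu_a(w_{\vec i})$ carry \emph{several} independent entries recording $i_1,\ldots,i_{t-1}$ alongside $i_tq$, so the Cantor trick becomes unnecessary. That is a genuine simplification of the combinatorial side, bought at the price of a larger matrix.

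Second, the matrix construction. The paper builds the matrices algebraically: it realizes each variable $a_i$ via a block-shift matrix $N$ and a localized Jordan block (their Lemma~7), then closes under sum, product and scalar multiple using direct sums and Kronecker products (their Lemma~8), yielding a clean statement (their Lemma~9) that any integer polynomial can be produced as the sole nonzero entry of the product. Your automaton-style description with parallel tracks and binomial-coefficient chains is aiming at the same target and can be made rigorous, but the ``non-interference'' you rightly flag as the delicate point is precisely what the paper's block-shift structure handles for free: each block's $x$'s act only in their own slot because $N$ advances the slot between $y$'s. In particular, a naive unipotent chain for $\binom{i_\ell}{j}$ will keep evolving under the $x$'s of later blocks unless you isolate it this way. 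If you want to tighten your write-up, the cleanest route is to take a direct sum of the paper's single-entry gadgets—one for each of $i_1,\ldots,i_{t-1}$ and one for $i_t\,Q(a,i_1,\ldots,i_{t-1})$—and replace the paper's single output column by the identity on the right, so that all $t$ values land in distinct columns of the first row; that discharges the bookkeeping without any hand-waving.
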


We will continue with examples which illustrate the problem considered in
Theorem~\ref{the:main1}. In the examples we assume that $t$ is a positive
integer,
\[\mu:\Sigma_t^*\to \mat\]
is a morphism such that $\mu(z_i)$ is nonsingular for $i=1,\ldots,t+1$. We denote
\[\mu(x)=M \mbox{ and } \mu(z_i)=N_i\]
for $i=1,\ldots,t+1$.

\begin{example}\label{ex:1}
Assume that $t=2$. Let $\mu(x)=\Big( \begin{array}{ll}
        3 & 0 \\
        0 & 1 \\
        \end{array} \Big)$ and let
$\mu(z_2)=\Big( \begin{array}{ll}
        2 & 1 \\
        0 & 3 \\
        \end{array} \Big)$.
Then
\[\mu(x^mz_2x^n)=\left( \begin{array}{cl}
        2\cdot 3^{m+n} & 3^m \\
        0 & 3 \\
        \end{array} \right)\] for all $m,n\in\N$. Hence $\mu$ is injective on
$L_2$.
\end{example}

\begin{example}\label{ex:2}
 Assume that $t=1$. Let $M = c
        \Big( \begin{array}{cc}
        1 & b \\
        0 & 1 \\
        \end{array} \Big)
$ where $b,c\in\Q$ and $c\neq 0$. Then
\[M^{^n} = c^n
        \Big( \begin{array}{cc}
        1 & nb \\
        0 & 1 \\
        \end{array} \Big)\]
for all $n\geq 0$. It follows that there exist different integers $m,n\ge 0$
such that
\[M^{^m}=M^{^n}\]
if and only if $c\in\{-1,1\}$ and $b=0$. Hence $\mu$ is injective on $L_1$ if
and only if $c \not\in \{-1,1\}$ or $b\ne 0$.
\end{example}

\begin{example}\label{ex:3}
 Assume that $t=2$ and let $M$ be as in Example~\ref{ex:2}. Let
\[N_2 = \Big( \begin{array}{cc}
        A_2 & B_2 \\
        0 & C_2 \\
        \end{array} \Big)\]
where $A_2,B_2,C_2\in \Q$. Then
\[M^{^m}N_2M^{^n} =c^{m+n} \Big( \begin{array}{cc}
        A_2 & A_2bn+B_2+C_2bm \\
        0 & C_2 \\
        \end{array} \Big)\]
for all $m,n\ge 0$. This implies that if $c\not\in \{-1,1\}$, then $\mu$ is
injective if and only if $A_2b\ne C_2b$. If $c\in \{-1,1\}$, then $\mu$ is not
injective on $L_2$.
\end{example}

\begin{example}\label{ex:4}
Assume that $t\ge 3$. Let $M$ and $N_2$ be as in Example~\ref{ex:3} and let
\[N_3 = \Big( \begin{array}{cc}
        A_3 & B_3 \\
        0 & C_3 \\
        \end{array} \Big)\]
where $A_3,B_3,C_3\in \Q$. Then we can find two different triples
$(m_1,m_2,m_3)$ and $(n_1,n_2,n_3)$ of nonnegative integers such that
\[m_1+m_2+m_3=n_1+n_2+n_3\]
and
\[C_2C_3m_1+A_2C_3m_2+A_2A_3m_3=C_2C_3n_1+A_2C_3n_2+A_2A_3n_3.\]
This implies that
\[M^{^{m_1}}N_2M^{^{m_2}}N_3M^{^{m_3}}=M^{^{n_1}}N_2M^{^{n_2}}N_3M^{^{n_3}}\]
which shows that $\mu$ is not injective on $L_t$.
\end{example}

\section{Proof of Theorem~\ref{the:main1}}
\subsection{From matrices to representations of rational numbers}

For any $r\in\Q\setminus\{0\}$ and any word $w=w_{n-1}\cdots w_1w_0$ (where the $w_i$'s are any digits),
we define the {\em value of $w$ with respect to the base $r$} to be the number
\[\val_r(w)=\sum_{i=0}^{n-1} w_i\, r^i.\]

For any number $m$, we introduce a corresponding letter denoted by $\overline{m}$
such that $\val_r(\overline{m})=m$ holds for any base $r$.

The following lemma is straightforward.

\begin{lemma}\label{lem:puissance}
Let 
$M = c
        \Big( \begin{array}{cc}
        a & b \\
        0 & 1 \\
        \end{array} \Big)
$ where $c,a,b\in\Q$. Then
\[M^n = c^n
        \left( \begin{array}{cc}
        a^n & \val_a(\overline{b}^{\, n}) \\
        0 & 1 \\
        \end{array} \right) \]
for any $n\ge1$.
\end{lemma}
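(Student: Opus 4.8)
The plan is to proceed by induction on $n\ge 1$, which is the natural approach since the claimed formula is an explicit closed form for the matrix power. The base case $n=1$ asks us to check that $\val_a(\overline{b})=b$, which holds by the very definition of the letter $\overline{b}$: for any base $r$ we have $\val_r(\overline{b})=b$, in particular for $r=a$. So the $n=1$ case is immediate.

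For the inductive step, assume the formula holds for some $n\ge 1$ and compute $M^{n+1}=M^n\cdot M$. Multiplying
\[
c^n\left(\begin{array}{cc} a^n & \val_a(\overline b^{\,n})\\ 0 & 1\end{array}\right)
\cdot
c\left(\begin{array}{cc} a & b\\ 0 & 1\end{array}\right)
=
c^{n+1}\left(\begin{array}{cc} a^{n+1} & a^n b+\val_a(\overline b^{\,n})\\ 0 & 1\end{array}\right),
\]
so it remains to identify the top-right entry $a^n b+\val_a(\overline b^{\,n})$ with $\val_a(\overline b^{\,n+1})$. This is where the notation does the work: if $\overline b^{\,n}=\overline b\cdots\overline b$ ($n$ copies) read as a word over the single digit-letter $\overline b$, then $\val_a(\overline b^{\,n})=\sum_{i=0}^{n-1} b\,a^i$, and prepending one more letter $\overline b$ as the new most significant digit adds exactly the term $b\,a^n$, i.e. $\val_a(\overline b^{\,n+1})=a^n b+\val_a(\overline b^{\,n})$. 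Combining this with the matrix product completes the induction.

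There is no real obstacle here — the lemma is flagged as straightforward — and the only point that needs care is bookkeeping: making sure the word $\overline b^{\,n}$ is interpreted with the correct digit positions so that concatenation on the left corresponds to multiplication by the base, matching the recurrence $\val_a(\overline b^{\,n+1}) = a^n b + \val_a(\overline b^{\,n})$ that falls out of the matrix multiplication. Once the indexing convention of $\val_r$ is pinned down, the computation is a one-line check.
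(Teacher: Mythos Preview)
Your proof is correct. The paper omits the proof entirely, declaring the lemma ``straightforward''; your induction on $n$ with the identification $\val_a(\overline{b}^{\,n+1})=b\,a^n+\val_a(\overline{b}^{\,n})$ is exactly the natural argument one would supply.
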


The following lemma shows that in order to prove Theorem \ref{the:main1} we
can study representations of rational numbers in a rational base.

\begin{lemma}\label{lem:matrix-form}
Let  $s\ge1$ be a positive integer, let $M = c
        \Big( \begin{array}{cc}
        a & b \\
        0 & 1 \\
        \end{array} \Big)
$ with $a,b,c\in\Q$ and, for $i=1,\ldots,s+1$, let 
$N_i=\Big( \begin{array}{cc}
        A_i & B_i \\
        0 & C_i \\
        \end{array} \Big)$ with $A_i,B_i,C_i\in\Q$.
Then we can compute rational numbers $q_1,\ldots, q_{s+1},p_1,\ldots,p_s$ such that
\begin{gather}\label{eq:lem2}
N_1M^{^{m_1}}N_2
\cdots N_sM^{^{m_s}}N_{s+1} \hspace{10cm}\\ 
\hspace{2cm} = c^{m_1+\cdots +m_s}
        \left( \begin{array}{cc}
        A_1\cdots A_{s+1} a^{m_1+\cdots +m_s} & \val_a(\overline{q_1}\, \overline{p_1}^{^{m_s-1}}
        \, \overline{q_2}
        \cdots\, \overline{q_s}\, \overline{p_s}^{^{m_1-1}}\, \overline{q_{s+1}}) \\
        0 & C_1\cdots C_{s+1} \\
        \end{array} \right)\nonumber
\end{gather}
for all positive integers $m_1,\ldots,m_s$.
\end{lemma}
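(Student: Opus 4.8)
The plan is to argue by induction on $s$, using the elementary multiplication rule for upper-triangular $2\times2$ matrices together with Lemma~\ref{lem:puissance}. Since
\[
\begin{pmatrix}\alpha & \beta\\ 0 & \gamma\end{pmatrix}\begin{pmatrix}\alpha' & \beta'\\ 0 & \gamma'\end{pmatrix}=\begin{pmatrix}\alpha\alpha' & \alpha\beta'+\beta\gamma'\\ 0 & \gamma\gamma'\end{pmatrix},
\]
in any product of such matrices the two diagonal entries are simply the products of the corresponding diagonal entries, which already yields the diagonal of the right-hand side of~\eqref{eq:lem2} once Lemma~\ref{lem:puissance} is applied to each block $M^{m_i}$. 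So the real content is the $(1,2)$-entry. Throughout I will use the identities $\val_a(\overline{b}^{\,n})=b(1+a+\cdots+a^{n-1})$, the concatenation rule $\val_a(uv)=\val_a(u)\,a^{|v|}+\val_a(v)$, and the fact that multiplying every digit of a word by a scalar multiplies its value by that scalar.

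For the base case $s=1$, Lemma~\ref{lem:puissance} gives $N_1M^{m_1}=c^{m_1}\bigl(\begin{smallmatrix}A_1a^{m_1} & A_1\val_a(\overline{b}^{\,m_1})+B_1\\ 0 & C_1\end{smallmatrix}\bigr)$, and multiplying on the right by $N_2$ produces a $(1,2)$-entry $c^{m_1}\bigl(A_1B_2\,a^{m_1}+A_1C_2\val_a(\overline{b}^{\,m_1})+B_1C_2\bigr)$. Putting $q_1=A_1B_2$, $p_1=A_1C_2b$ and $q_2=(A_1b+B_1)C_2$ and expanding $\val_a(\overline{q_1}\,\overline{p_1}^{\,m_1-1}\,\overline{q_2})=q_1a^{m_1}+p_1(a+a^2+\cdots+a^{m_1-1})+q_2$, the two expressions coincide; the degenerate case $m_1=1$, where $\overline{p_1}^{\,m_1-1}$ is the empty word, is handled correctly because the constant term of $A_1C_2\val_a(\overline{b}^{\,m_1})$ is exactly $A_1C_2b$.

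For the inductive step I would write the left-hand side of~\eqref{eq:lem2} as $N_1M^{m_1}\cdot\bigl(N_2M^{m_2}\cdots N_sM^{m_s}N_{s+1}\bigr)$ and apply the induction hypothesis, after the obvious reindexing, to the bracketed product of length $s-1$; this produces numbers $q_1',\dots,q_s'$ and $p_1',\dots,p_{s-1}'$ such that the bracketed matrix has $(1,2)$-entry $c^{m_2+\cdots+m_s}\val_a(w')$ with $w'=\overline{q_1'}\,\overline{p_1'}^{\,m_s-1}\,\overline{q_2'}\cdots\overline{q_{s-1}'}\,\overline{p_{s-1}'}^{\,m_2-1}\,\overline{q_s'}$. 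Multiplying $N_1M^{m_1}$ on the left, the new $(1,2)$-entry equals $c^{m_1+\cdots+m_s}$ times $A_1a^{m_1}\val_a(w')+A_1C_2\cdots C_{s+1}\val_a(\overline{b}^{\,m_1})+B_1C_2\cdots C_{s+1}$. Now set $p_s=A_1C_2\cdots C_{s+1}b$ and $q_{s+1}=p_s+B_1C_2\cdots C_{s+1}$: the last two summands then equal $\val_a(\overline{p_s}^{\,m_1-1}\,\overline{q_{s+1}})$, a word of length exactly $m_1$, while $A_1a^{m_1}\val_a(w')=a^{m_1}\val_a(A_1\cdot w')$, where $A_1\cdot w'$ is $w'$ with every digit multiplied by $A_1$. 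Putting $q_j=A_1q_j'$ for $1\le j\le s$ and $p_j=A_1p_j'$ for $1\le j\le s-1$, we have $A_1\cdot w'=\overline{q_1}\,\overline{p_1}^{\,m_s-1}\,\overline{q_2}\cdots\overline{q_s}$, so the concatenation rule with $u=A_1\cdot w'$ and $v=\overline{p_s}^{\,m_1-1}\,\overline{q_{s+1}}$ (of length $m_1$) gives exactly the word $\overline{q_1}\,\overline{p_1}^{\,m_s-1}\,\overline{q_2}\cdots\overline{q_s}\,\overline{p_s}^{\,m_1-1}\,\overline{q_{s+1}}$ of~\eqref{eq:lem2}. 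Since each $q_j$ and $p_j$ is obtained from the entries of $M$ and the $N_i$ by finitely many rational operations, they are computable, completing the induction.

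I expect the main obstacle to be purely organizational: tracking the reversal of the exponents $m_1,\dots,m_s$ when passing to the sub-product, keeping straight which scalar factors accompany each digit (the $A$'s from the matrices to the left of a ``turn'' and the $C$'s from those to its right), and handling the boundary cases $m_i=1$ in which some $\overline{p_j}$-blocks vanish. The one genuinely non-routine point is realizing that the two constant contributions — $B_1C_2\cdots C_{s+1}$ from $N_1$ and $A_1C_2\cdots C_{s+1}b$ from the lowest-order digit of $\val_a(\overline{b}^{\,m_1})$ — should be merged into the single trailing block $\overline{p_s}^{\,m_1-1}\,\overline{q_{s+1}}$, whose length is then exactly $m_1$, so that the concatenation identity applies.
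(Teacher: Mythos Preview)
Your proof is correct and follows essentially the same inductive strategy as the paper; the only difference is that the paper peels off the factor $M^{m_{s+1}}N_{s+2}$ on the right in the inductive step, whereas you peel off $N_1M^{m_1}$ on the left, which leads to multiplying all previous digits by $A_1$ instead of by $C_{s+2}$. The two computations are mirror images of one another and yield the same conclusion.
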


\begin{proof}
We proceed by induction on $s$. Suppose first that $s=1$. 
If $m_1\ge1$, Lemma~\ref{lem:puissance} implies
\begin{eqnarray*}
N_1M^{^{m_1}}N_2
&=& 
\left( \begin{array}{cc}
        A_1 & B_1 \\
        0 & C_1 \\
        \end{array} \right) c^{m_1}
\left( \begin{array}{cc}
        a^{m_1} & \val_a(\overline{b}^{\, m_1}) \\
        0 & 1 \\
        \end{array} \right) 
\left( \begin{array}{cc}
        A_2 & B_2 \\
        0 & C_2 \\
        \end{array} \right) \\
&=& c^{m_1}\left( \begin{array}{cc}
        A_1 a^{m_1} & A_1\val_a(\overline{b}^{\, m_1})+B_1 \\
        0 & C_1 \\
        \end{array} \right) 
\left( \begin{array}{cc}
        A_2 & B_2 \\
        0 & C_2 \\
        \end{array} \right)\\
&=& c^{m_1}\left( \begin{array}{cc}
        A_1A_2 a^{m_1} & A_1B_2a^{m_1}+A_1C_2\val_a(\overline{b}^{\, m_1})+B_1C_2 \\
        0 & C_1C_2 \\
        \end{array} \right) \\
&=& c^{m_1}\left( \begin{array}{cc}
        A_1A_2 a^{m_1} & \val_a\left(\overline{A_1B_2}\  \overline{A_1C_2b}^{^{m_1-1}}\, \overline{C_2(A_1b+B_1)}\right) \\
        0 & C_1C_2 \\
        \end{array} \right). 
\end{eqnarray*}
This implies the claim for $s=1$.

Let then $s\ge1$ and assume inductively that we have computed rational numbers
 $q_1,\ldots, q_{s+1},p_1,\ldots,p_s$ such that \eqref{eq:lem2} holds for all $m_1,\ldots,m_s\ge1$.
Let $m_{s+1}\ge1$ and let $N_{s+2}=\Big( \begin{array}{cc}
        A_{s+2} & B_{s+2} \\
        0 & C_{s+2} \\
        \end{array} \Big)$.
For the sake of brevity, let us denote $d_1=A_1\cdots A_{s+1}$, $d_2=C_1\cdots C_{s+1}$
and $N_{s+2}=\Big( \begin{array}{cc}
        A & B \\
        0 & C \\
        \end{array} \Big)$. Then
\[N_1M^{^{m_1}}N_2M^{^{m_2}}N_3\cdots N_{s+1}M^{^{m_{s+1}}}N_{s+2} \hspace{8cm}\]
\vspace{-0.5cm}
\begin{eqnarray*}
\hspace{0.5cm} &=& c^{m_1+\cdots +m_s}
        \left( \begin{array}{cc}
        d_1 a^{m_1+\cdots+ m_s} & T \\
        0 & d_2 \\
        \end{array} \right) c^{m_{s+1}}
\left( \begin{array}{cc}
        a^{m_{s+1}} & \val_a(\overline{b}^{\, m_{s+1}}) \\
        0 & 1 \\
        \end{array} \right) 
\left( \begin{array}{cc}
        A & B \\
        0 & C \\
        \end{array} \right)\\
&=& c^{m_1+\cdots +m_{s+1}}
        \left( \begin{array}{cc}
        d_1 a^{m_1+\cdots +m_s} & T \\
        0 & d_2 \\
        \end{array} \right)
\left( \begin{array}{cc}
        A a^{m_{s+1}} & Ba^{m_{s+1}}+C\val_a(\overline{b}^{\, m_{s+1}}) \\
        0 & C \\
        \end{array} \right) \\
&=&c^{m_1+\cdots +m_{s+1}}
        \left( \begin{array}{cc}
        d_1A a^{m_1+\cdots +m_{s+1}} & d_1a^{m_1+\cdots +m_s}(Ba^{m_{s+1}}+C\val_a(\overline{b}^{\, m_{s+1}}))+CT \\
        0 & d_2 C\\
        \end{array} \right)
\end{eqnarray*}
where $T=\val_a(\overline{q_1}\, \overline{p_1}^{^{m_s-1}}\, \overline{q_2}\, 
                \cdots\overline{q_s}\, \overline{p_s}^{^{m_1-1}}\, \overline{q_{s+1}})$.
We compute $d_1A=A_1\cdots A_{s+2}$, $d_2C=C_1\cdots C_{s+2}$ and
\begin{gather*}
d_1a^{m_1+\cdots +m_s}(Ba^{m_{s+1}}+C\val_a(\overline{b}^{\, m_{s+1}}))+CT \hspace{7.5cm}\\
= \val_a(\overline{d_1B}\  \overline{d_1Cb}^{^{m_{s+1}-1}}\, \overline{C(d_1b+q_1)}\
         \overline{Cp_1}^{^{m_s-1}}\, \overline{Cq_2}\ 
                \cdots\overline{Cq_s}\ \overline{Cp_s}^{^{m_1-1}}\, \overline{Cq_{s+1}}).
\end{gather*}
This concludes the proof.
\end{proof}

\subsection{Comparison of the representations}

If $\Sigma$ is an alphabet, we let $\hat{\Sigma}$ be the alphabet
defined by
\[\hat{\Sigma}=\left\{\left[\begin{array}{c}
        \sigma_1 \\
        \sigma_2 \\
        \end{array} \right]\colon \sigma_1, \sigma_2\in\Sigma\right\}.\]
A word in $\hat{\Sigma}^{^{ \scriptstyle *}}$ given by
\[\left[\begin{array}{c}
        \sigma_{i_1} \\
        \sigma_{j_1} \\
        \end{array} \right]
\left[\begin{array}{c}
        \sigma_{i_2} \\
        \sigma_{j_2} \\
        \end{array} \right]
\cdots
\left[\begin{array}{c}
        \sigma_{i_\ell} \\
        \sigma_{j_\ell} \\
        \end{array} \right]\]
will be written as 
\[\left[\begin{array}{c}
        \sigma_{i_1}\sigma_{i_2}\cdots \sigma_{i_\ell} \\
        \sigma_{j_1}\sigma_{j_2}\cdots \sigma_{j_\ell} \\
        \end{array} \right].\]

In what follows it is important to observe that if we have a word
$\left[\begin{array}{c}
        w_1 \\
        w_2 \\
        \end{array} \right]$
in $\hat{\Sigma}^*$ then necessarily the words $w_1$ and $w_2$ have equal
lengths.

The next lemma shows that in comparing the representations of rational numbers
we can use regular languages.

\begin{lemma}\label{lem:regular}
Let $S\subseteq \Q$ be a finite nonempty set, let $S_1=\{\overline{s}\colon s\in S\}$
and let $X=\hat{S_1}$. Let $r\in\Q\setminus\{-1,0,1\}$.
Then the language
\[ L=\left\{\left[ \begin{array}{c}
        w_1 \\
        w_2 \\
	\end{array} \right]
\in X^{^{\scriptstyle *}}\colon
\val_r(w_1)=\val_r(w_2)\right\}\]
is effectively regular.
\end{lemma}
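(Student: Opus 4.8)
The plan is to build a finite automaton that reads the two tracks of a word $\left[\begin{smallmatrix} w_1 \\ w_2 \end{smallmatrix}\right]$ in $X^*$ simultaneously, from the \emph{least significant digit}, and keeps track of the current ``carry'' as a rational number; the automaton accepts exactly when every carry along the way stays in a fixed finite set and the final carry is zero. Concretely, write $r = u/v$ in lowest terms with $u,v\in\Z$, $|u|,|v|\ge 1$, and note $|u|\ge 2$ or $|v|\ge 2$ since $r\notin\{-1,0,1\}$. Reading digits $w_1 = a_{n-1}\cdots a_0$ and $w_2 = b_{n-1}\cdots b_0$ (these have equal length, as observed before the lemma), we have $\val_r(w_1) = \val_r(w_2)$ iff $\sum_{i} (a_i - b_i) r^i = 0$. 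Define partial remainders by $R_0 = 0$ and, after reading position $i$, $R_{i+1} = (R_i + (a_i - b_i))/r$; then $R_j = \sum_{i=0}^{j-1}(a_i-b_i) r^{i-j}$, so $\val_r(w_1)=\val_r(w_2)$ iff $R_n = 0$.

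First I would show the reachable carries form a finite, effectively computable set. Let $D = \max\{\,|s-t| : s,t\in S\,\}$ and let $B = D\cdot \frac{|v|}{|u|-|v|}$ if $|u| > |v|$, and more generally choose a bound using that $|R_{i+1}| \le (|R_i| + D)\cdot |v|/|u|$ when $|u|>|v|$; when $|v|\ge |u|$ one instead controls denominators, since each step multiplies by $v/u$ and the digit differences have bounded numerator, so every $R_i$ lies in $\frac{1}{u^{\infty}}\Z$ — here I would argue that the \emph{denominators} of the $R_i$ are bounded (they can only involve powers of $|u|$ up to a point where the numerator is forced to be too large unless it has already stabilized), hence the set $C$ of carries actually reachable from $R_0=0$ by reading words over $X$ is finite, and one can enumerate it by BFS starting from $0$, stopping because any carry whose absolute value exceeds $\frac{D}{|\,|u|-|v|\,|/|v|}$-type bound (after the denominators have stabilized) can never return to $0$ and may be merged into a single dead state. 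The automaton is then: states $C \cup \{\mathrm{dead}\}$, start state $0$, accepting state $0$, and transition $R \xrightarrow{[a/b]} (R + (a-b))/r$ whenever the target is in $C$, else to $\mathrm{dead}$.

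Second, I would verify that this automaton recognizes $L$: by the identity $R_n = \sum_{i=0}^{n-1}(a_i-b_i)r^{i-n}$, a word is accepted iff the run never leaves $C$ (automatic, since we argued all carries reachable from $0$ lie in $C$) and $R_n = 0$, i.e.\ iff $\val_r(w_1) = \val_r(w_2)$. Finiteness of the automaton and computability of its transition table give effective regularity. The main obstacle is the argument that $C$ is finite and effectively bounded when $|v| \ge |u|$ (the ``expanding'' or ``denominator-growing'' case): one must show that although $|R_i|$ need not shrink, the pair (bounded numerator range after clearing a bounded power of $|u|$, the fact that a carry with a large denominator cannot be cancelled to $0$) forces only finitely many values to be reachable from $0$; I would handle this by writing $R_i$ over the common denominator $u^{i}$ (or noting the $v/u$ recursion and that $\gcd$ considerations cap the denominator), bounding the numerator by a geometric-type estimate, and then invoking that only finitely many rationals with bounded numerator and bounded denominator exist. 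Everything else is a routine automaton construction.
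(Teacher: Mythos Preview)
Your approach---a carry automaton reading from the least significant digit---is exactly the paper's, but the finiteness argument has a real gap. The claim that the set $C$ of carries reachable from $0$ is finite is false in general, even when $|r|>1$: take $r=2$ with integer digit differences $c_i\in\{-1,0,1\}$; then $R_i\in 2^{-i}\Z\cap[-1,1]$, and the union over all $i$ is the set of dyadic rationals in $[-1,1]$, which is infinite. When $|r|<1$ the $|R_i|$ themselves are unbounded (for $r=\tfrac12$ and $c_i\equiv1$ one gets $R_i=2^{i+1}-2$). Your dead-state remark points toward the right fix---only carries that can \emph{return} to $0$ matter---but you still need a denominator bound for those, and the sketch you give (``powers of $|u|$ up to a point where the numerator is forced too large'') is not an argument.

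The paper closes both holes with two reductions you omitted. First, $\val_r(x_n\cdots x_0)=\val_r(y_n\cdots y_0)$ iff $\val_{r^{-1}}(x_0\cdots x_n)=\val_{r^{-1}}(y_0\cdots y_n)$, so by closure under reversal one may assume $|r|>1$. Second, one may scale and enlarge $S$ so that the digits lie in $\{-(m-1),\dots,m-1\}\subset\Z$. After these reductions the paper shows directly that on any accepting run every carry is an \emph{integer} (a one-line divisibility argument from $\sum_j c_j u^j v^{n-j}=0$ and $\gcd(u,v)=1$) and is bounded by $d:=(2m-2)/(|r|-1)$; the state set is then simply $[-d,d]\cap\Z$. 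So the missing idea is not the automaton itself but the restriction to carries on accepting runs together with their integrality, and the paper's two reductions turn that into a short computation.
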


\begin{proof}
First, observe that
\[\val_r(x_n\cdots x_1x_0)=\val_r(y_n\cdots y_1y_0)\]
holds if and only if 
\[\val_{r^{-1}}(x_0x_1\cdots x_n)=\val_{r^{-1}}(y_0y_1\cdots y_n)\]
holds (here, the $x_i$'s and $y_i$'s are digits). 
Because the class of effectively regular languages is closed under reversal, we may assume $|r|>1$ without loss of generality. 

Next, we assume without loss of generality that
\[S=\{-m+1,-m+2,\ldots,-1,0,1,\ldots,m-2,m-1\}\]
where $m$ is a positive integer.
In other words, we will assume that
\[ X=\left\{\left[\begin{array}{c}
        \overline{a} \\
        \overline{b} \\
        \end{array} \right]\colon a,b\in\{-m+1,-m+2,\ldots,-1,0,1,\ldots,m-2,m-1\}\right\}.\]

Let $r=\frac uv$, where $u,v\in\Z$ do not have any nontrivial common factor.
Let $d=\frac{2m-2}{|r|-1}$.
We define the nondeterministic automaton $\mathcal A=(Q, X,\delta,\{q_0\},\{q_0\})$ as follows:
\[Q=\{q_i\colon i\in[-d,d]\cap\Z\}\]
and
\[\delta\big(q_i,\left[ \begin{array}{c}
	\overline{a} \\
	\overline{b} \\
	\end{array} \right]\big)
=\left\{ \begin{array}{cl}
	q_j, &\ \text{ if } i+a-b=rj; \\
	\emptyset, &\ \text{ if } \frac{i+a-b}{r} \not\in[-d,d]\cap\Z.\\
	\end{array}\right.\]
We will prove $L(\mathcal A)=L^T$. (Here $L^T$ is the reversal of $L$.)

Assume first that
\[\left[ \begin{array}{c}
	\overline{a_0} \\
	\overline{b_0} \\
	\end{array} \right]
\left[ \begin{array}{c}
	\overline{a_1} \\
	\overline{b_1} \\
	\end{array} \right]
\cdots
\left[ \begin{array}{c}
	\overline{a_n} \\
	\overline{b_n} \\
	\end{array} \right]
\in L^T,\]
or, equivalently,
\begin{equation}\label{eq:H1}
a_0+a_1r+\cdots+a_nr^n=b_0+b_1r+\cdots+b_nr^n.
\end{equation}
We claim that there exist states $q_{\alpha_1},q_{\alpha_2},\ldots,q_{\alpha_{n+1}}\in Q$ such that
\begin{equation}\label{eq:TH1}
\delta\big(q_0,\left[ \begin{array}{c}
	\overline{a_0} \\
	\overline{b_0} \\
	\end{array} \right]
\left[ \begin{array}{c}
	\overline{a_1} \\
	\overline{b_1} \\
	\end{array} \right]
\cdots
\left[ \begin{array}{c}
	\overline{a_i} \\
	\overline{b_i} \\
	\end{array} \right]\big)
=q_{\alpha_{i+1}}
\end{equation}
and
\begin{equation}\label{eq:TH2}
\alpha_{i+1}+a_{i+1}+\cdots+a_nr^{n-i-1}=b_{i+1}+\cdots+b_nr^{n-i-1}
\end{equation}
hold for all $i=0,\ldots,n$.

We first show the existence of $q_{\alpha_1}$. Because \eqref{eq:H1} implies
\[a_0v^n+a_1uv^{n-1}+\cdots+a_nu^n=b_0v^n+b_1uv^{n-1}+\cdots+b_nu^n,\]
we have $a_0\equiv b_0\bmod{u}$. Hence  
\[\alpha_1=\frac{a_0-b_0}{r}=\frac{(a_0-b_0)v}{u}\]
is an integer. 
Because $|a_0|\leq m-1$ and $|b_0|\leq m-1$, we have
\[|\alpha_1|=\frac{|a_0-b_0|}{|r|}\le d,\]
and hence the state $q_{\alpha_1}$ exists.

Further, we have
\[\delta\big(q_0,\left[ \begin{array}{c}
	\overline{a_0} \\
	\overline{b_0} \\
	\end{array} \right]\big)=q_{\alpha_1}\]
and 
\[\alpha_1+a_1+a_2r+\cdots+a_nr^{n-1}=b_1+b_2r+\cdots+b_nr^{n-1}.\]
This proves the claim for $i=0$.

Assume then $j\in\{1,\ldots,n\}$ and assume that there exist $q_{\alpha_1},\ldots,q_{\alpha_j}\in Q$ such that
\eqref{eq:TH1} and \eqref{eq:TH2} hold for $i=0,\ldots,j-1$. 
From \eqref{eq:TH2} it follows 
\[\alpha_j+a_j\equiv b_j\bmod u.\]
Hence 
\[\alpha_{j+1}=\frac{\alpha_j+a_j-b_j}{r}=\frac{(\alpha_j+a_j-b_j)v}{u}\]
is an integer. Because we have
\[|\alpha_{j+1}|=\frac{|\alpha_j+a_j-b_j|}{|r|}
	\le  \frac{|\alpha_j|+|a_j-b_j|}{|r|}
	\le  \frac{d+2m-2}{|r|}=\frac{d+d(|r|-1)}{|r|}=d,\]
the state $q_{\alpha_{j+1}}$ exists. Further, we have
\[\delta\big(q_0,\left[ \begin{array}{c}
	\overline{a_0} \\
	\overline{b_0} \\
	\end{array} \right]
\left[ \begin{array}{c}
	\overline{a_1} \\
	\overline{b_1} \\
	\end{array} \right]
\cdots
\left[ \begin{array}{c}
	\overline{a_j} \\
	\overline{b_j} \\
	\end{array} \right]\big)
=\delta\big(q_{\alpha_j},
\left[ \begin{array}{c}
	\overline{a_j} \\
	\overline{b_j} \\
	\end{array} \right]\big)
=q_{\alpha_{j+1}}\]
and
\[\alpha_{j+1}+a_{j+1}+a_{j+2}r+\cdots+a_nr^{n-j-1}=b_{j+1}+b_{j+2}r+\cdots+b_nr^{n-j-1}.\]
This concludes the proof of the claim. 

From the claim it follows
\[\delta\big(q_0,\left[ \begin{array}{c}
	\overline{a_0} \\
	\overline{b_0} \\
	\end{array} \right]
\left[ \begin{array}{c}
	\overline{a_1} \\
	\overline{b_1} \\
	\end{array} \right]
\cdots
\left[ \begin{array}{c}
	\overline{a_n} \\
	\overline{b_n} \\
	\end{array} \right]\big)
=q_{\alpha_{n+1}}\]
and
\[\alpha_{n+1}=0.\]
Therefore
\[\left[ \begin{array}{c}
	\overline{a_0} \\
	\overline{b_0} \\
	\end{array} \right]
\left[ \begin{array}{c}
	\overline{a_1} \\
	\overline{b_1} \\
	\end{array} \right]
\cdots
\left[ \begin{array}{c}
	\overline{a_n} \\
	\overline{b_n} \\
	\end{array} \right]
\in L(\mathcal{A}).\]
Hence $L^T\subseteq L(\mathcal{A})$.

Suppose now that
\[\left[ \begin{array}{c}
	\overline{a_0} \\
	\overline{b_0} \\
	\end{array} \right]
\left[ \begin{array}{c}
	\overline{a_1} \\
	\overline{b_1} \\
	\end{array} \right]
\cdots
\left[ \begin{array}{c}
	\overline{a_n} \\
	\overline{b_n} \\
	\end{array} \right]
\in L(\mathcal{A}).\]
Then there exist states $q_{\alpha_0},q_{\alpha_1},\ldots,q_{\alpha_{n+1}}\in Q$ such that
\[\delta\big(q_{\alpha_i},\left[ \begin{array}{c}
	\overline{a_i} \\
	\overline{b_i} \\
	\end{array} \right]\big)
=q_{\alpha_{i+1}}\]
for $i=0,\ldots,n$ and $\alpha_0=\alpha_{n+1}=0$. 
By the definition of $\mathcal A$ we have
\[\alpha_i+a_i-b_i=r\alpha_{i+1}\]
 for $i=0,\ldots,n$. This implies
\[a_0+a_1r+\cdots+a_nr^n=b_0+b_1r+\cdots+b_nr^n.\]
Hence 
\[\left[ \begin{array}{c}
	\overline{a_0} \\
	\overline{b_0} \\
	\end{array} \right]
\left[ \begin{array}{c}
	\overline{a_1} \\
	\overline{b_1} \\
	\end{array} \right]
\cdots
\left[ \begin{array}{c}
	\overline{a_n} \\
	\overline{b_n} \\
	\end{array} \right]
\in L^T.\]
Therefore $L(\mathcal{A})\subseteq L^T$.
\end{proof}

\subsection{A decidability method for Theorem \ref{the:main1}}

We are now ready for the proof of Theorem~\ref{the:main1}.

Let $t$ be a positive integer and assume that
\[\mu\colon \Sigma_t^* \to\mat \] is a morphism such that
$\mu(z_i)$ is nonsingular for $i=1,\ldots,t+1$.

First, we consider the particular case where $\mu(x)$ is singular. 
Suppose $\mu(x)=\Big( \begin{array}{cc}
	a & b \\
	0 & 0 \\
	\end{array} \Big)$, 
the case $\mu(x)=\Big( \begin{array}{cc}
	0 & b \\
	0 & c \\
	\end{array} \Big)$ being symmetric.
Then $\mu(x^n)=a^{n-1}\mu(x)$ for all $n\ge 1$. If $t=1$, then $\mu$ in
injective on $L_1$ if and only if $a\not\in \{-1,0,1\}$. If $t\ge 2$, then
the equation $\mu(x^2z_2x)=\mu(xz_2x^2)$ implies that $\mu$ is not injective
on $L_t$. 

For the rest of the proof we suppose that $\mu(x)$ is not singular.
Let \[\mu(x)=M = c\left( \begin{array}{cc}
        a & b \\
        0 & 1 \\
        \end{array} \right)\]
and, for $i=1,\ldots,t+1$, let 
\[\mu(z_i)=N_i=\Big( \begin{array}{cc}
        A_i & B_i \\
        0 & C_i \\
        \end{array} \Big),\]
where $a,b,c,A_i,B_i,C_i \in \Q$ for $i=1,\ldots,t+1$. Because $M$ and
$N_i$ are nonsingular, $a,c,A_i,C_i$ are nonzero for $i=1,\ldots,t+1$.
 
If $a=-1$, then $M^2=c^2I$. If $t\ge 2$, then $\mu$ is not injective on $L_t$ because we
have $N_1M^2N_2=N_1N_2M^2$. If $t=1$ and $c\in \{-1,1\}$, then $\mu$ is not
injective on $L_t$ because $N_1N_2=N_1M^2N_2$. If $t=1$ and $c\not \in \{-1,1\}$, it
follows from the equation $\det(M^n)=(-c)^n$ that $\mu$ is injective on $L_t$.
 
For the rest of the proof we suppose in addition that $a\neq -1$. We suppose
also that $a\neq 1$. In fact, we have already proved Theorem~\ref{the:main1} if $a=1$ in
Examples~\ref{ex:2}, \ref{ex:3} and \ref{ex:4}.

For each subset $K\subseteq\{1,\ldots,t\}$, let 
\[L_t(K)=\{z_1x^{m_1}z_2x^{m_2}z_3\cdots z_tx^{m_t}z_{t+1}\colon m_i=0 \text{ for } i\in K,\  m_i\ge 1 \text{ for } i\not\in K\}.\]
Now $L_t$ is a disjoint union of the languages $L_t(K)$ where $K$ runs over all the subsets of $\{1,\ldots,t\}$.
Hence the morphism $\mu$ is injective on $L_t$ if and only if
\begin{itemize}
\item[(i)] \label{eq:i} for each $K\subseteq\{1,\ldots,t\}$,  $\mu$ is injective on $L_t(K)$; and
\item[(ii)]  \label{eq:ii} if $K_1,K_2\subseteq\{1,\ldots,t\}$ with $K_1\neq K_2$, 
then there does not exist two words $w_1\in L_t(K_1)$ and $w_2\in L_t(K_2)$ such that $\mu(w_1)=\mu(w_2)$.
\end{itemize}

We first prove that (ii) is decidable.
For $w_1\in L_t(K_1)$ and $w_2\in L_t(K_2)$, we have
\[\mu(w_1)=N_1'M^{^{k_1}}N_2'M^{^{k_2}}N_3'\cdots N_{s_1}'M^{^{k_{s_1}}}N_{s_1+1}'\]
and
\[\mu(w_2)=N_1''M^{^{\ell_1}}N_2''M^{^{\ell_2}}N_3''\cdots N_{s_2}''M^{^{\ell_{s_2}}}N_{s_2+1}''\]
where $s_1=t-|K_1|$, $s_2=t-|K_2|$, $k_i\ge 1$ for $i=1,\ldots,s_1$, $\ell_j\ge 1$ for $j=1,\ldots,s_2$ and
\[N_1N_2\cdots N_{t+1}=N_1'N_2'\cdots N_{s_1+1}'=N_1''N_2''\cdots N_{s_2+1}''.\]
In view of Lemma~\ref{lem:matrix-form}, deciding (ii) is equivalent to deciding the following two problems: 
\begin{enumerate}
\medskip
\item[A :] Given positive integers $s_1,s_2$ and rational numbers
$p_1,\ldots, p_{s_1}$, $q_1,\ldots,q_{s_1+1}$, $\alpha_1,\ldots,\alpha_{s_2}$, $\beta_1,\ldots,\beta_{s_2+1}$, 
decide whether there exist positive integers $k_1,\ldots,k_{s_1}$, $\ell_1,\ldots,\ell_{s_2}$
such that the two matrices
\begin{gather}\label{mat1}
 \hspace{1.35cm} c^{k_1+\cdots +k_{s_1}}
	\left( \begin{array}{cc}
        A_1\cdots A_{t+1} a^{k_1+\cdots +k_{s_1}} & \val_a(\overline{q_1}\, \overline{p_1}^{^{k_{s_1}-1}}
        \, \overline{q_2}\,
        \cdots\,\overline{q_{s_1}}\, \overline{p_{s_1}}^{^{k_1-1}}\, \overline{q_{s_1+1}}) \\
	0 & C_1\cdots C_{t+1} \\
        \end{array} \right)
\end{gather}
and
\begin{gather}\label{mat2}
\hspace{1.35cm} c^{\ell_1+\cdots +\ell_{s_2}}
	\left( \begin{array}{cc}
        A_1\cdots A_{t+1} a^{\ell_1+\cdots +\ell_{s_2}} & \val_a(\overline{\beta_1}\, \overline{\alpha_1}^{^{\ell_{s_2}-1}}\,
                 \overline{\beta_2}\,
                 \cdots\, \overline{\beta_{s_2}}\, \overline{\alpha_{s_2}}^{^{\ell_1-1}}\, \overline{\beta_{s_2+1}}) \\
	0 & C_1\cdots C_{t+1} \\
	\end{array} \right)
\end{gather}
are equal.
\medskip
\item[B :] Given a positive integer $s$ and rational numbers
$q,p_1,\ldots,p_s, q_1,\ldots,q_{s+1}$, 
decide whether there exist positive integers $k_1,\ldots,k_s$
such that the two matrices
\begin{gather}\label{mat3}
  c^{k_1+\cdots +k_s}
	\left( \begin{array}{cc}
        A_1\cdots A_{t+1} a^{k_1+\cdots +k_s} & \val_a(\overline{q_1}\, \overline{p_1}^{^{k_s-1}}
        \overline{q_2}\, \cdots\, \overline{q_s}\, \overline{p_s}^{^{k_1-1}}\, \overline{q_{s+1}}) \\
	0 & C_1\cdots C_{t+1} \\
      \end{array} \right)
\end{gather}
and
\begin{gather}\label{mat4}
	\left( \begin{array}{cc}
        A_1\cdots A_{t+1} &  q \\
	0 & C_1\cdots C_{t+1} \\
	\end{array} \right)
\end{gather}
are equal.
\end{enumerate}

\medskip
Problem B corresponds to the case where one of the subsets $K_1$ and $K_2$ is equal to $\{1,\ldots,t\}$. 
Because the products $ac$, $A_1\cdots A_{t+1}$ and $C_1\cdots C_{t+1}$ are
nonzero, a necessary condition for the equality of (\ref{mat3}) and
(\ref{mat4}) is
\begin{gather*}
a^{k_1+\cdots+ k_s}=1.
\end{gather*}
Because $a\not\in \{-1,1\}$ this condition never holds and Problem B has no
solutions.

We now turn to Problem A. Because the products $ac$, $A_1\cdots A_{t+1}$ and $C_1\cdots C_{t+1}$ are
nonzero, (\ref{mat1}) and (\ref{mat2}) are equal if and only if
\begin{gather}\label{eq:lengths1}
a^{k_1+\cdots+ k_{s_1}}= a^{\ell_1+\cdots+\ell_{s_2}},
\end{gather}

\begin{gather}\label{eq:lengths2}
c^{k_1+\cdots+ k_{s_1}}=c^{\ell_1+\cdots+\ell_{s_2}}
\end{gather}
and
\begin{gather}\label{eq:values}
\val_a(\overline{q_1}\, \overline{p_1}^{^{k_{s_1}-1}}
        \, \overline{q_2}\,
	\cdots\, \overline{q_{s_1}}\,
                \overline{p_{s_1}}^{^{k_1-1}}\, \overline{q_{s_1+1}})  
= \val_a(\overline{\beta_1}\, \overline{\alpha_1}^{^{\ell_{s_2}-1}}\,
                 \overline{\beta_2}\,
		\cdots\,   \overline{\beta_{s_2}}\,
                        \overline{\alpha_{s_2}}^{^{\ell_1-1}}\, \overline{\beta_{s_2+1}}).
\end{gather}

Because $a\not\in \{-1,0,1\}$
\eqref{eq:lengths1} and \eqref{eq:lengths2} hold if and only if
\begin{gather}\label{eq:sums}
k_1+\cdots +k_{s_1}=\ell_1+\cdots +\ell_{s_2}.
\end{gather}

Let now
$S=\{q_1,\ldots,q_{s_1+1},p_1,\ldots,p_{s_1},\beta_1,\ldots,\beta_{s_2+1},
\alpha_1,\ldots,\alpha_{s_2}\}$, let $S_1=\{\overline{s}\colon s\in S\}$  
and let $X=\hat{S_1}$. Let
\[ L=\left\{\left[ \begin{array}{c}
        u_1 \\
        u_2 \\
        \end{array} \right]
\in X^{^{\scriptstyle *}}\colon
\val_a(u_1)=\val_a(u_2)\right\}\]
and let
\begin{gather*} 
T_1=\left\{\left[ \begin{array}{c}
        u_1 \\
        u_2 \\
        \end{array} \right]
\in X^{^{\scriptstyle *}}\colon 
u_1\in \overline{q_1}\, \overline{p_1}^*\, \overline{q_2}
\cdots\, \overline{q_{s_1}}\, \overline{p_{s_1}}^*\, \overline{q_{s_1+1}} ,\,  u_2 \in \overline{\beta_1}\, \overline{\alpha_1}^*\, \overline{\beta_2}\,
\cdots\, \overline{\beta_{s_2}}\, \overline{\alpha_{s_2}}^*\, \overline{\beta_{s_2+1}} \right\}. 
\end{gather*}

By Lemma~\ref{lem:regular}, $L$ is effectively regular. 
So is clearly $T_1$. In fact, it is easy to construct a finite automaton which accepts $T_1$.
Now we can decide (ii) by checking whether or not
\[L\cap T_1=\emptyset.\]

Indeed, suppose a word $\left[ \begin{array}{c}
        u_1 \\
        u_2 \\
        \end{array} \right] \in X^*$
belongs to $L\cap T_1$. Then there exist positive integers $k_1,\ldots,k_{s_1},\ell_1,\ldots,\ell_{s_2}$ such that
\[u_1=
	\overline{q_1}\, \overline{p_1}^{^{k_{s_1}-1}}\, \overline{q_2}\,
                \cdots\overline{q_{s_1}}\,
                \overline{p_{s_1}}^{^{k_1-1}}\, \overline{q_{s_1+1}}\]
and
\[u_2=
	\overline{\beta_1}\, \overline{\alpha_1}^{^{\ell_{s_2}-1}}\,
                 \overline{\beta_2}\, \cdots
                        \overline{\beta_{s_2}}\,
                        \overline{\alpha_{s_2}}^{^{\ell_1-1}}\,
                        \overline{\beta_{s_2+1}}.\]
Because
$\left[ \begin{array}{c}
        u_1 \\
        u_2 \\
        \end{array} \right] \in L\cap T_1$, we have $\val_a(u_1)=\val_a(u_2)$ and $|u_1|=|u_2|$.
The latter condition means that
\[k_{s_1}+\cdots +k_1+1=\ell_{s_2}+\cdots+\ell_1 +1\]
which gives \eqref{eq:sums}. 
Hence \eqref{mat1} and \eqref{mat2} are equal. % and we see that $\mu$ is not injective on $L_t$.
Conversely, if there exist positive integers $k_1,\ldots,k_{s_1},\ell_1,\ldots,\ell_{s_2}$ such that
the matrices \eqref{mat1} and \eqref{mat2} are equal, then
\[\left[ \begin{array}{c}
         \overline{q_1}\, \overline{p_1}^{^{k_{s_1}-1}}\, \overline{q_2}\,
                \cdots\overline{q_{s_1}}\,
                \overline{p_{s_1}}^{^{k_1-1}}\, \overline{q_{s_1+1}}\\
         \overline{\beta_1}\, \overline{\alpha_1}^{^{\ell_{s_2}-1}}\,
                 \overline{\beta_2}\, \cdots
                        \overline{\beta_{s_2}}\,
                        \overline{\alpha_{s_2}}^{^{\ell_1-1}}\,
                        \overline{\beta_{s_2+1}} \\
        \end{array} \right] \in L\cap T_1.\]

To conclude the proof of Theorem~\ref{the:main1} it remains to prove that also (i) is decidable. 
We have to decide a variant of Problem A where $s_1=s_2$,
$p_i=\alpha_i$ and $q_j=\beta_j$ for $1\le i \le s_1$, $1\le j \le s_1+1$ and
we have to find out whether there exist two different $s_1$-tuples
$(k_1,\ldots,k_{s_1})$ and $(\ell_1,\ldots,\ell_{s_1})$ of positive integers
such that (\ref{eq:values}) and (\ref{eq:sums}) hold. 
Before we can proceed as we did above in case (ii) we have to check whether there exist
different $s_1$-tuples $(k_1,\ldots,k_{s_1})$ and $(\ell_1,\ldots,\ell_{s_1})$
of positive integers such that
\[\overline{q_1}\, \overline{p_1}^{^{k_{s_1}-1}}
        \, \overline{q_2}\, \cdots\, \overline{q_{s_1}}\,
                \overline{p_{s_1}}^{^{k_1-1}}\, \overline{q_{s_1+1}}
=\overline{q_1}\, \overline{p_1}^{^{\ell_{s_1}-1}}
        \, \overline{q_2}\, \cdots\, \overline{q_{s_1}}\,
                \overline{p_{s_1}}^{^{\ell_1-1}}\, \overline{q_{s_1+1}}.\]
Observe that such $s_1$-tuples may exist, for example, they do exist if $p_1=q_2=p_2$. 
However, it is easy to decide whether there are such $s_1$-tuples. 
If there are, $\mu$ is not injective on $L_t(K)$. 
We continue with the assumption that such $s_1$-tuples do not exist. 
Then we can decide (i) proceeding as we did above. 
The only difference is that we replace $T_1$ by
\begin{gather*}
T_2=\left\{\left[ \begin{array}{c}
        u_1 \\
        u_2 \\
        \end{array} \right]
\in T_1\colon u_1\neq u_2 \right\}.
\end{gather*}
This is done because we do not want $T_2$ to include words
$\Big[ \begin{array}{c}
        u_1 \\
        u_2 \\
        \end{array} \Big]$
such that
\[u_1= \overline{q_1}\, \overline{p_1}^{^{k_{s_1}-1}}
        \, \overline{q_2}\, \cdots\, \overline{q_{s_1}}\,
                \overline{p_{s_1}}^{^{k_1-1}}\, \overline{q_{s_1+1}},\]
\[u_2=\overline{q_1}\, \overline{p_1}^{^{\ell_{s_1}-1}}
        \, \overline{q_2}\, \cdots\, \overline{q_{s_1}}\,
                \overline{p_{s_1}}^{^{\ell_1-1}}\, \overline{q_{s_1+1}}\]
and
\[(k_1,\ldots,k_{s_1})=(\ell_1,\ldots,\ell_{s_1}).\]
Observe that we did not have this problem in case (ii) because there the
languages $L_t(K_1)$ and $L_t(K_2)$ were disjoint.

\section{Proof of Theorem~\ref{the:main2}}

Let us fix some notation first. If $A_1, A_2, \ldots, A_s$ are matrices, then their {\em direct sum} 
$A_1\oplus A_2 \oplus \cdots \oplus A_s$ is
\[	\left( \begin{array}{cccc}
	A_1 & 0 & \cdots &  0 \\
	0 & A_2 & \cdots &  0 \\
	\vdots & \vdots & \ddots & \vdots \\
	0 & 0 & \cdots & A_s \\
	\end{array} \right).
\]
If $A=(a_{ij})_{m\times n}$ and $B$ are matrices, then their {\em Kronecker product} $A\otimes B$ is
\[	\left( \begin{array}{cccc}
	a_{11}B & a_{12}B & \cdots &  a_{1m}B \\
	a_{21}B & a_{22}B & \cdots &  a_{2m}B \\
	\vdots & \vdots & & \vdots \\
	a_{m1}B & a_{m2}B & \cdots & a_{mn}B \\
	\end{array} \right).
\]
In both cases, we have used block notation.

The direct sum and the Kronecker product have the following properties: 
if $A_1, A_2,$ $\ldots,A_s$ are $m\times m$ matrices and
$B_1, B_2, \ldots, B_s$ are $n\times n$ matrices, then
\[(A_1\oplus B_1)(A_2\oplus B_2)\cdots (A_s\oplus B_s)= (A_1A_2\cdots A_s)\oplus (B_1B_2\cdots B_s)\]
and
\[(A_1\otimes B_1)(A_2\otimes B_2)\cdots (A_s\otimes B_s)= (A_1A_2\cdots A_s)\otimes (B_1B_2\cdots B_s).\]
For more details on the Kronecker product, see for example \cite[Chapter 12]{Lancaster-Tismenetsky} or \cite{Kuich-Salomaa}.

If $k$ is a positive integer, then $E_k=(e_{ij})_{k\times k}$ is the $k\times
k$ matrix whose only nonzero entry is $e_{1k}=1$.

The main idea of our proof of Theorem~\ref{the:main2}
is to use the undecidability of Hilbert's tenth problem combined with the following result. 
Suppose that $t$ is a positive integer and that $p(x_1,\ldots,x_t)$ is a
polynomial with integer coefficients.
We want to find a positive integer $k$ and matrices $A,M,N,B\in\matkZ$ such that
\[AM^{^{a_1}}NM^{^{a_2}}N\cdots NM^{^{a_t}}B=p(a_1,\ldots,a_t)E_k\]
%\left( \begin{array}{cccc}
%        0 & \cdots & 0 & p(a_1,\ldots,a_t) \\
%        0 & \cdots & 0 &  0 \\
%        \vdots &  &\vdots & \vdots \\
%        0 & \cdots & 0 & 0 \\
%        \end{array} \right)
%\]
for all nonnegative integers $a_1,\ldots,a_t$.

Fix the value of $t$. 

\begin{lemma}
Let $i\in\{1,\ldots, t\}$. Then there exists a positive integer $k$ and
matrices $A,M,N,B\in\matkZ$ such that
\[AM^{^{a_1}}NM^{^{a_2}}N\cdots NM^{^{a_t}}B=a_iE_k\]
%\left( \begin{array}{cccc}
%        0 & \cdots & 0 & a_i \\
%        0 & \cdots & 0 &  0 \\
%        \vdots &  &\vdots & \vdots \\
%        0 & \cdots & 0 & 0 \\
%        \end{array} \right)
%\]
for all nonnegative integers $a_1,\ldots,a_t$.
\end{lemma}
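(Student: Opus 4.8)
The plan is to construct the matrices $A,M,N,B$ explicitly so that the product $AM^{a_1}NM^{a_2}N\cdots NM^{a_t}B$ picks out precisely the coordinate $a_i$. The basic mechanism is the observation that the $2\times 2$ matrix $\bigl(\begin{smallmatrix}1&1\\0&1\end{smallmatrix}\bigr)$ raised to the power $a$ equals $\bigl(\begin{smallmatrix}1&a\\0&1\end{smallmatrix}\bigr)$, so a single $M^{a}$ sandwiched appropriately produces a linear form in $a$. The difficulty is that we have \emph{one} fixed matrix $M$ that gets raised to all the different powers $a_1,\ldots,a_t$ in turn, but we only want the $i$-th exponent to survive. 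The idea is to use a direct sum of $t$ blocks (or rather a carefully chosen $(t{+}1)$- or $(2t)$-dimensional construction) and to route the computation through the blocks using the fixed ``shift'' matrix $N$, so that after reading $M^{a_1}$ we are in block $1$, after the first $N$ we move to block $2$, and so on; the exponent $a_j$ only interacts nontrivially with the matrix while we are sitting in block $i$, and acts as the identity (up to the passive upper-triangular part) in every other block.

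Concretely, I would first treat the toy case $t=1$: take $M=\bigl(\begin{smallmatrix}1&1\\0&1\end{smallmatrix}\bigr)$, $A=(0\ 1)$ reshaped into $\matkZ$ form with $k=2$, $A=E_2^{T}$-flavoured, $B=\bigl(\begin{smallmatrix}0\\1\end{smallmatrix}\bigr)$ so that $AM^{a_1}B=a_1 E_2$; this is immediate from Lemma~\ref{lem:puissance} (with $a=c=1$, $b=1$). Then for general $t$ and a target index $i$, I would set $k=t+1$ and let $M$ be the direct sum of $t$ copies of suitable $1\times 1$ or absorbed blocks together with the active $2\times 2$ block in position $i$; more cleanly, work with upper-triangular $(t{+}1)\times(t{+}1)$ matrices where $M$ has $1$'s on the diagonal and a single off-diagonal $1$ in the $(i,i{+}1)$ position (so $M^{a}$ has $a$ in that position and $\binom{a}{2}$-type entries elsewhere that we must kill), while $N$ is the matrix that deterministically advances a ``token'' along the superdiagonal: $N$ sends basis vector $e_j$ to $e_{j+1}$ on the diagonal structure, so that the prefix $AM^{a_1}NM^{a_2}N\cdots$ has moved the relevant mass past block $i$ exactly when the exponent $a_i$ has been applied. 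I would choose $A$ to inject into the first coordinate and $B$ to read out the last, forcing only the monomial that passes through the $(i,i{+}1)$ entry of $M$ during the $i$-th step to contribute, and that contribution is $a_i$.

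The key steps, in order: (1) fix $k=t+1$ (or $k=2t$ if the bookkeeping is cleaner with genuine $2\times2$ blocks); (2) define $M,N,A,B$ explicitly in $\matkZ$; (3) compute $M^{a_j}$ via Lemma~\ref{lem:puissance} or by direct expansion, noting that the only entries of $M^{a_j}$ that matter after multiplication by the surrounding $N$'s and by $A,B$ are the $(1,1)$-type diagonal and the single relevant superdiagonal entry; (4) multiply out the product step by step, tracking which row/column of the running partial product is ``live''; (5) observe that all the terms involving $a_j$ for $j\ne i$ are multiplied by zero blocks of $A$, $B$, or $N$ and hence vanish, while the $a_i$ term lands exactly in the $(1,k)$ position, giving $a_iE_k$.

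The main obstacle will be step (4)–(5): ensuring that the fixed matrix $N$ — which must be the \emph{same} matrix between every pair of $M$-powers — both advances the computation by exactly one block per application and simultaneously annihilates the unwanted linear (and higher-degree) contributions of the ``wrong'' $M^{a_j}$'s. This is a constraint-satisfaction problem on a single nilpotent-plus-identity upper-triangular matrix, and getting $N$ to do double duty (routing and killing) is the delicate part; if a single $N$ cannot be made to work in dimension $t+1$, the fix is to pad the dimension (use $k$ somewhat larger than $t+1$, with extra ``scratch'' coordinates) so that routing and annihilation happen on disjoint coordinates. One also has to be slightly careful that $a_j=0$ is allowed, so $M^{a_j}$ may be the identity, which is consistent with the construction since the identity simply leaves the live coordinate unchanged. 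Once the matrices are written down, verifying the identity $AM^{a_1}NM^{a_2}N\cdots NM^{a_t}B=a_iE_k$ is a routine induction on the number of factors.
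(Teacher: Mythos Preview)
Your approach is essentially the paper's: the paper takes $k=2t$, sets $M$ to be the block-diagonal matrix $I_2\oplus\cdots\oplus E\oplus\cdots\oplus I_2$ with $E=\bigl(\begin{smallmatrix}1&1\\0&1\end{smallmatrix}\bigr)$ in the $i$-th slot, lets $N$ be the block-superdiagonal shift (identity blocks one step above the diagonal), and takes $A=e_{1,1}$, $B=e_{k,k}$ so that $A(\cdot)B$ extracts the $(1,k)$ entry---exactly the ``one active block plus a routing shift'' idea you describe. Two small corrections to your sketch: with $M=I+e_{i,i+1}$ one has $e_{i,i+1}^2=0$, so $M^a=I+a\,e_{i,i+1}$ and there are \emph{no} $\binom{a}{2}$-type terms to kill (your $k=t+1$ variant therefore works cleanly and is in fact slightly leaner than the paper's $k=2t$); and in the $t=1$ toy case you need $A=e_{1,1}$ and $B=e_{2,2}$ as $2\times 2$ matrices (your written choice $A=(0\ 1)$ would pick out the constant $1$, not $a_1$).
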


\begin{proof}
Let $k=2t$, 
\[A=\left( \begin{array}{cccc}
	1 & 0& \cdots & 0 \\
	0 & 0&  \cdots & 0 \\
	\vdots & \vdots & & \vdots \\
	0 & 0 & \cdots & 0 \\
	\end{array} \right)
\text{ and }
B=\left( \begin{array}{cccc}
	0 & \cdots & 0 & 0 \\
	0 &  \cdots & 0 & 0 \\
	\vdots & & \vdots & \vdots \\
	0 & \cdots & 0 & 1 \\
	\end{array} \right),
\]
where $A,B\in\matkZ$.
Let $E=\Big( \begin{array}{ll}
	1 & 1 \\
	0 & 1 \\
	\end{array} \Big)$ 
and $I=\Big( \begin{array}{ll}
	1 & 0 \\
	0 & 1 \\
	\end{array} \Big)$ . 
Let
\[M=I\oplus \cdots \oplus I \oplus E \oplus I \oplus \cdots \oplus I,\]
where there are $t$ summands of which $E$ is the $i$th one, and let
\[N=\left( \begin{array}{ccccc}
	0 & I & 0 & \cdots & 0 \\
	0 & 0 & I & \cdots & 0 \\
	\vdots & \vdots & \vdots & & \vdots \\
	0 & 0 & 0 & \cdots & I \\
	0 & 0 & 0 & \cdots & 0 \\
        \end{array} \right)
\]
be a $k\times k$ matrix where each $0$ stands for the $2\times 2$ zero matrix.

Then $A,M,N,B\in\matkZ$ and we have
\begin{eqnarray*}
M^n 
	&=& I\oplus \cdots \oplus I \oplus E^n \oplus I \oplus \cdots \oplus I\\
	&=& I\oplus \cdots \oplus I \oplus \Big( \begin{array}{ll}
											1 & n \\
											0 & 1 \\
										\end{array} \Big) 
				\oplus I \oplus \cdots \oplus I
\end{eqnarray*}
 for all $n\in\N$. 

Now, if $D$ is any matrix in $\matkZ$ then the only nonzero entry of $ADB$ is the last entry in the first row,
which is equal to $D_{1k}$.
Let us compute this entry for \[AM^{^{a_1}}NM^{^{a_2}}N\cdots NM^{^{a_t}}B\]
where $a_1,\ldots,a_t$ are nonnegative integers.
For this, we regard $M$ and $N$ as $t\times t$ matrices consisting of $2\times 2$ blocks:
\begin{align*}
(M^{^{a_1}}N&M^{^{a_2}}N\cdots NM^{^{a_t}})_{1t}\\
	&= (M^{^{a_1}})_{11}N_{12}(M^{^{a_2}})_{22} N_{23}\cdots N_{i-1,i}(M^{^{a_i}})_{ii} N_{i,i+1}\cdots N_{t-1,t}(M^{^{a_t}})_{tt}\\
	&= I\cdot I \cdot I \cdots I \cdot \Big( \begin{array}{ll}
										1 & a_i \\
										0 & 1 \\
									\end{array} \Big) 
					\cdot I  \cdots I\\
	&= \Big( \begin{array}{ll}
				1 & a_i \\
				0 & 1 \\
			\end{array} \Big).
\end{align*}
The results follows.
\end{proof}

\begin{lemma}
Let $p_1(x_1,\ldots,x_t)$ and $p_2(x_1,\ldots,x_t)$ be polynomials with
integer coefficients.
Suppose there exist $s_1, s_2\ge1$, $A_1,M_1,N_1,B_1\in\Z^{s_1\times s_1}_{\rm uptr}$
and $A_2,M_2,N_2,B_2\in\Z^{s_2\times s_2}_{\rm uptr}$ such that
\[A_1M_1^{^{a_1}}N_1M_1^{^{a_2}}N_1\cdots N_1M_1^{^{a_t}}B_1
=p_1(a_1,\ldots,a_t)E_{s_1}\]
%=\left( \begin{array}{cccc}
%        0 & \cdots & 0 & p_1(a_1,\ldots,a_t) \\
%        0 & \cdots & 0 &  0 \\
%        \vdots &  &\vdots & \vdots \\
%        0 & \cdots & 0 & 0 \\
%        \end{array} \right)\]
and
\[A_2M_2^{^{a_1}}N_2M_2^{^{a_2}}N_2\cdots N_2M_2^{^{a_t}}B_2
=p_2(a_1,\ldots,a_t)E_{s_2}\]
%=\left( \begin{array}{cccc}
%        0 & \cdots & 0 & p_2(a_1,\ldots,a_t) \\
%        0 & \cdots & 0 &  0 \\
%        \vdots &  &\vdots & \vdots \\
%        0 & \cdots & 0 & 0 \\
%        \end{array} \right)\]
for all $a_1,\ldots,a_t\in\N$.
Then
\begin{itemize}
\item[(i)] there exist $s_3\ge1$ and $A_3,M_3,N_3,B_3\in\Z^{s_3\times s_3}_{\rm uptr}$ such that
\[A_3M_3^{^{a_1}}N_3M_3^{^{a_2}}N_3\cdots N_3M_3^{^{a_t}}B_3
=(p_1+p_2)(a_1,\ldots,a_t)E_{s_3}\]
%=\left( \begin{array}{cccc}
%        0 & \cdots & 0 & (p_1+p_2)(a_1,\ldots,a_t)\\
%        0 & \cdots & 0 &  0 \\
%        \vdots &  &\vdots & \vdots \\
%        0 & \cdots & 0 & 0 \\
%        \end{array} \right)\]
for all $a_1,\ldots,a_t\in\N$;

\item[(ii)] there exist $s_4\ge1$ and $A_4,M_4,N_4,B_4\in\Z^{s_4\times s_4}_{\rm uptr}$ such that
\[A_4M_4^{^{a_1}}N_4M_4^{^{a_2}}N_4\cdots  N_4M_4^{^{a_t}}B_4 =(p_1\cdot
p_2)(a_1,\ldots,a_t)E_{s_4}\]
%=\left( \begin{array}{cccc}
%        0 & \cdots & 0 & (p_1\cdot p_2)(a_1,\ldots,a_t) \\
%        0 & \cdots & 0 &  0 \\
%        \vdots &  &\vdots & \vdots \\
%        0 & \cdots & 0 & 0 \\
%        \end{array} \right)\]
for all $a_1,\ldots,a_t\in\N$;

\item[(iii)] if $c\in\Z$, then there exists $A_5\in\Z^{s_1\times s_1}_{\rm uptr}$ such that
\[A_5M_1^{^{a_1}}N_1M_1^{^{a_2}}N_1\cdots N_1M_1^{^{a_t}}B_1 =c\cdot
p_1(a_1,\ldots,a_t)E_{s_1}\]
%=\left( \begin{array}{cccc}
%        0 & \cdots & 0 & c\cdot p_1(a_1,\ldots,a_t) \\
%        0 & \cdots & 0 &  0 \\
%        \vdots &  &\vdots & \vdots \\
%        0 & \cdots & 0 & 0 \\
%        \end{array} \right)\]
for all $a_1,\ldots,a_t\in\N$.
\end{itemize}
\end{lemma}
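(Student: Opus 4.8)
The plan is to build the three new representations directly from the given ones using direct sums and Kronecker products, exploiting the multiplicativity properties of these operations stated above together with the elementary facts that $E_m \oplus E_n$ is not quite $E_{m+n}$ but can be fixed by a conjugation/permutation, and that $E_m \otimes E_n = E_{mn}$. The key observation is that if $A M^{a_1} N M^{a_2} N \cdots N M^{a_t} B = p(a_1,\ldots,a_t) E_s$, then the matrix on the right is a rank-one matrix supported in the top-right corner, so combining two such products amounts to combining two scalars $p_1(a_1,\ldots,a_t)$ and $p_2(a_1,\ldots,a_t)$ by addition or multiplication while keeping everything upper-triangular.

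For part (i), I would take $s_3 = s_1 + s_2$ and set $M_3 = M_1 \oplus M_2$, $N_3 = N_1 \oplus N_2$, so that by the direct-sum multiplicativity the product $M_3^{a_1} N_3 \cdots M_3^{a_t}$ equals $(M_1^{a_1} N_1 \cdots M_1^{a_t}) \oplus (M_2^{a_1} N_2 \cdots M_2^{a_t})$. I would then choose $A_3$ and $B_3$ not simply as $A_1 \oplus A_2$ and $B_1 \oplus B_2$ (which would give $p_1 E_{s_1} \oplus p_2 E_{s_2}$, a matrix with two nonzero entries) but rather so that the two corner entries $p_1$ and $p_2$ are routed into the single top-right position of the $s_3 \times s_3$ matrix: concretely, let $A_3$ have its only nonzero block in the positions feeding the first rows of both summands, and let $B_3$ collect both last columns into column $s_3$. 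One checks that all four matrices remain upper-triangular and that $A_3 (\text{product}) B_3 = (p_1 + p_2) E_{s_3}$.

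For part (ii), I would use the Kronecker product: set $s_4 = s_1 s_2$, $M_4 = M_1 \otimes M_2$, $N_4 = N_1 \otimes N_2$, $A_4 = A_1 \otimes A_2$, $B_4 = B_1 \otimes B_2$. By Kronecker multiplicativity the product equals $(A_1 M_1^{a_1} \cdots B_1) \otimes (A_2 M_2^{a_1} \cdots B_2) = p_1(a_1,\ldots,a_t) E_{s_1} \otimes p_2(a_1,\ldots,a_t) E_{s_2} = p_1 p_2 \cdot (E_{s_1} \otimes E_{s_2}) = p_1 p_2 \cdot E_{s_1 s_2}$, using that the Kronecker product of two corner matrices is again a corner matrix. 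Upper-triangularity is preserved because the Kronecker product of upper-triangular matrices is upper-triangular (with the right ordering of basis vectors). For part (iii), the scaling is immediate: replace $A_1$ by $c A_1$, which stays upper-triangular with integer entries, and the product scales by $c$.

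The main obstacle is the bookkeeping in part (i): one must verify carefully that the modified $A_3$ and $B_3$ still lie in $\Z^{s_3 \times s_3}_{\rm uptr}$ and that the corner-routing really produces exactly $(p_1+p_2)E_{s_3}$ with no spurious off-diagonal contributions — in particular that the blocks of $A_3$ and $B_3$ that touch the second summand are placed in positions compatible with upper-triangularity. One clean way to avoid delicate index-chasing is to order the combined basis as (first row of block 1, remaining rows of block 1, first row of block 2, remaining rows of block 2) so that the top-right corner of the whole matrix coincides with a position that receives both $p_1$ and $p_2$; alternatively, one can first prove part (i) in the weaker form yielding $p_1 E_{s_1} \oplus p_2 E_{s_2}$ and then note that an easy conjugation by an upper-triangular integer matrix consolidates the two corner entries. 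Parts (ii) and (iii) are essentially formal once the multiplicativity identities are in hand.
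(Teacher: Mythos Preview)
Your proposal is correct and follows essentially the same approach as the paper: direct sums for (i), Kronecker products for (ii), and scalar multiplication of $A_1$ for (iii). The paper resolves your bookkeeping concern in (i) by taking $A_3$ to be the ``first-row-all-ones'' matrix times $A_1\oplus A_2$ and $B_3$ to be $B_1\oplus B_2$ times the ``last-column-all-ones'' matrix, both of which are manifestly upper-triangular; this is exactly your ``route both corners into the single top-right position'' idea made explicit, and it avoids any conjugation.
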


\begin{proof}
To prove (i) we take $M_3=M_1\oplus M_2$, $N_3=N_1\oplus N_2$,
\[A_3=\left( \begin{array}{cccc}
        1 & 1 & \cdots & 1 \\
        0 & 0&  \cdots & 0 \\
        \vdots &  \vdots & & \vdots \\
        0 & 0 & \cdots  & 0 \\
        \end{array} \right)
\cdot (A_1\oplus A_2)\]
and
\[B_3= (B_1\oplus B_2)\cdot \left( \begin{array}{cccc}
        0 & \cdots & 0 & 1 \\
        0 & \cdots & 0 & 1 \\
        \vdots &  & \vdots & \vdots \\
        0 & \cdots  & 0 & 1 \\
        \end{array} \right).\]
To prove (ii) we take $A_4=A_1\otimes A_2$, $M_4=M_1\otimes M_2$,
$N_4=N_1\otimes N_2$ and $B_4=B_1\otimes B_2$.
To prove (iii) it suffices to take $A_5=cA_1$.
Then the claims follow by simple computations which are left to the reader.
\end{proof}

Now our goal is achieved and we can state the following lemma.

\begin{lemma}\label{lem:goal}
Let $t$ be any positive integer and $p(x_1,\ldots,x_t)$ be any polynomial with
integer coefficients.
Then there effectively exists a positive integer $k$ and matrices
$A,M,N,B\in\matkZ$ such that
\[AM^{^{a_1}}NM^{^{a_2}}N\cdots NM^{^{a_t}}B
=\left( \begin{array}{cccc}
	0 & \cdots & 0 & p(a_1,\ldots,a_t) \\
	0 & \cdots & 0 &  0 \\
	\vdots &  &\vdots & \vdots \\
	0 & \cdots & 0 & 0 \\
	\end{array} \right)\]
for all $a_1,\ldots,a_t\in\N$.
\end{lemma}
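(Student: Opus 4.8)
The plan is to prove Lemma~\ref{lem:goal} by a straightforward structural induction on the polynomial $p$, using the three previous lemmas as the building blocks. The first lemma gives the result for each monomial of degree one, namely $p=x_i$, producing matrices $A,M,N,B$ with $AM^{a_1}N\cdots NM^{a_t}B=a_iE_k$. The second lemma supplies the inductive machinery: part~(i) handles addition of two polynomials already representable in this form, part~(ii) handles multiplication, and part~(iii) handles multiplication by an integer constant. What remains is essentially bookkeeping to assemble these pieces and to cover the base cases that the first lemma does not itself address.

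Concretely, I would first observe that every polynomial $p(x_1,\ldots,x_t)\in\Z[x_1,\ldots,x_t]$ can be written as a finite sum of terms, each of which is an integer constant times a product of finitely many of the variables $x_i$ (with repetitions allowed). The base cases are: the constant polynomial $p=c$, handled by taking $k=2$, $A=cE_2'$ or more simply by noting $A=\big(\begin{smallmatrix} c & 0\\ 0 & 0\end{smallmatrix}\big)$, $M=N=I$, $B=\big(\begin{smallmatrix} 0 & 0\\ 0 & 1\end{smallmatrix}\big)$ gives $AM^{a_1}N\cdots NM^{a_t}B=cE_2$; and the single-variable polynomial $p=x_i$, which is exactly the content of the first lemma. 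Then, given that $p_1$ and $p_2$ are representable (say by data over dimensions $s_1$ and $s_2$), the second lemma produces representations of $p_1+p_2$ and of $p_1\cdot p_2$, and also of $c\cdot p_1$ for any $c\in\Z$. Since $p$ is built up from constants and variables by finitely many additions and multiplications, finitely many applications of these constructions yield a positive integer $k$ and matrices $A,M,N,B\in\matkZ$ with $AM^{a_1}N\cdots NM^{a_t}B=p(a_1,\ldots,a_t)E_k$ for all $a_1,\ldots,a_t\in\N$. Finally, $p(a_1,\ldots,a_t)E_k$ is by definition of $E_k$ precisely the $k\times k$ matrix whose only nonzero entry is the top-right one, equal to $p(a_1,\ldots,a_t)$, which is the displayed matrix in the statement. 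Effectiveness is immediate since each step is an explicit matrix construction.

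The one point that deserves a little care, rather than being a genuine obstacle, is keeping track of the invariant through the induction: the second lemma is stated so that it preserves exactly the normal form ``$A_jM_j^{a_1}N_jM_j^{a_2}N_j\cdots N_jM_j^{a_t}B_j=q(a_1,\ldots,a_t)E_{s_j}$'' with the \emph{same} single matrix $N_j$ repeated between all the $M_j$ blocks and the \emph{same} right-hand side shape $qE_{s_j}$, and all of $A_j,M_j,N_j,B_j$ upper-triangular. One must check that the base cases are already in this normal form — they are, by inspection — and that the direct-sum and Kronecker-product constructions of the second lemma keep all four matrices upper-triangular, which they do since both operations preserve upper-triangularity and the extra rank-one factors used to re-aggregate the $A$ and $B$ matrices are themselves upper-triangular (the $A_3$, $B_3$ prefixes/suffixes are upper-triangular, and $A_1\otimes A_2$ etc.\ are upper-triangular when the factors are). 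Thus no difficulty arises; the proof is just: induct on the build-up of $p$, invoking the first lemma for variables, a trivial construction for constants, and the second lemma for the arithmetic operations.

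\begin{proof}
We argue by induction on the structure of $p$. Write $p$ as a finite $\Z$-linear combination of monomials, each monomial being a product of variables among $x_1,\ldots,x_t$.

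For the base cases, first let $p=c$ be a constant polynomial. Taking $k=2$, $A=\big(\begin{smallmatrix} c & 0\\ 0 & 0\end{smallmatrix}\big)$, $M=N=I=\big(\begin{smallmatrix} 1 & 0\\ 0 & 1\end{smallmatrix}\big)$ and $B=\big(\begin{smallmatrix} 0 & 0\\ 0 & 1\end{smallmatrix}\big)$, all of which lie in $\matkZ$, we get $AM^{a_1}NM^{a_2}N\cdots NM^{a_t}B=cE_2$ for all $a_1,\ldots,a_t\in\N$. Next, for $p=x_i$ with $i\in\{1,\ldots,t\}$, the first lemma provides a positive integer $k$ and matrices $A,M,N,B\in\matkZ$ with $AM^{a_1}NM^{a_2}N\cdots NM^{a_t}B=a_iE_k$ for all $a_1,\ldots,a_t\in\N$. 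In both cases the required normal form holds.

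For the inductive step, suppose $p=p_1+p_2$ or $p=p_1\cdot p_2$, where $p_1,p_2$ already admit representations as in the statement, over dimensions $s_1$ and $s_2$ respectively, by upper-triangular matrices. In the first case, part~(i) of the second lemma yields $s_3\ge1$ and $A_3,M_3,N_3,B_3\in\Z^{s_3\times s_3}_{\rm uptr}$ with $A_3M_3^{a_1}N_3M_3^{a_2}N_3\cdots N_3M_3^{a_t}B_3=(p_1+p_2)(a_1,\ldots,a_t)E_{s_3}$ for all $a_1,\ldots,a_t\in\N$. In the second case, part~(ii) yields $s_4\ge1$ and $A_4,M_4,N_4,B_4\in\Z^{s_4\times s_4}_{\rm uptr}$ with $A_4M_4^{a_1}N_4M_4^{a_2}N_4\cdots N_4M_4^{a_t}B_4=(p_1\cdot p_2)(a_1,\ldots,a_t)E_{s_4}$ for all $a_1,\ldots,a_t\in\N$. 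Multiplication of a representable polynomial $p_1$ by an integer $c$ is handled by part~(iii). Since every polynomial with integer coefficients is obtained from constants and from the variables $x_1,\ldots,x_t$ by finitely many additions and multiplications, finitely many applications of these constructions produce a positive integer $k$ and matrices $A,M,N,B\in\matkZ$ such that
\[AM^{^{a_1}}NM^{^{a_2}}N\cdots NM^{^{a_t}}B=p(a_1,\ldots,a_t)E_k\]
for all $a_1,\ldots,a_t\in\N$. By the definition of $E_k$, the right-hand side is exactly the $k\times k$ matrix whose unique possibly-nonzero entry is the entry in position $(1,k)$, equal to $p(a_1,\ldots,a_t)$, which is the matrix displayed in the statement. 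All the constructions above are explicit, so $k$ and $A,M,N,B$ can be computed from $p$; this proves effectiveness.
\end{proof}
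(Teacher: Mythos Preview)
Your approach is exactly what the paper intends: after proving the two preceding lemmas, the paper simply writes ``Now our goal is achieved'' and states Lemma~\ref{lem:goal} without a separate proof, leaving the structural induction implicit. Your write-up makes that induction explicit and is otherwise correct.

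There is, however, a computational slip in your base case for constant polynomials. With your choices
\[
A=\begin{pmatrix} c & 0\\ 0 & 0\end{pmatrix},\qquad
M=N=I,\qquad
B=\begin{pmatrix} 0 & 0\\ 0 & 1\end{pmatrix},
\]
one gets $AM^{a_1}N\cdots NM^{a_t}B=AB=0$, not $cE_2$. An easy fix in dimension $2$ is to take $A=\bigl(\begin{smallmatrix} 1 & 0\\ 0 & 0\end{smallmatrix}\bigr)$ and $B=\bigl(\begin{smallmatrix} 0 & c\\ 0 & 0\end{smallmatrix}\bigr)$, both upper-triangular, giving $AB=cE_2$; alternatively, $k=1$ with $A=(c)$ and $M=N=B=(1)$ yields $cE_1$. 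Note that this base case is genuinely needed and cannot be absorbed into the others: starting from the polynomials $x_1,\ldots,x_t$ and applying only sums, products, and integer scalar multiples (the operations supplied by the second lemma), every polynomial obtained vanishes at the origin, so a nonzero constant term can never appear that way.
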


\begin{remark}
Lemma~\ref{lem:goal} is closely related to the well-known fact stating that if $p(x_1,\ldots,x_t)$ is a polynomial
having integer coefficients, then the series
\[\sum_{n_1,\ldots,n_t\ge0} p(n_1,\ldots,n_t) \ x^{n_1}yx^{n_2}y\cdots yx^{n_t}\]
is $\Z$-rational; see for example \cite{Salomaa-Soittola}. The purpose of Lemma~\ref{lem:goal}  
is to show explicitly that we can get this result using only upper-triangular matrices.
\end{remark}

We will use a strong version of the undecidability of Hilbert's tenth problem as
stated in the following theorem (see Theorem 3.20 in \cite{Rozenberg-Salomaa}.

\begin{theorem}\label{the:Hilbert10}
There is a polynomial $P(x_1,x_2,\ldots,x_m)$ with integer coefficients such that no algorithm exists for deciding 
whether an arbitrary equation of the form
\[P(a,x_2,\ldots,x_m)=0,\]
where $a$ is a positive integer, has nonnegative integers $x_2,\ldots,x_m$ as a solution.
\end{theorem}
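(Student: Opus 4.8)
The plan is to derive Theorem~\ref{the:Hilbert10} from the Matiyasevich--Robinson--Davis--Putnam (MRDP) theorem, which states that a set $S\subseteq\N$ is recursively enumerable if and only if it is \emph{Diophantine}: there is a polynomial $Q(x_1,\ldots,x_n)$ with integer coefficients such that for every $a\in\N$ one has $a\in S$ if and only if $Q(a,x_2,\ldots,x_n)=0$ admits a solution $x_2,\ldots,x_n\in\N$. Granting this deep theorem, what remains is a short diagonalization.

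First I would fix a concrete recursively enumerable but non-recursive set $S$, for instance the diagonal halting set $S=\{e\in\N\colon\text{the }e\text{th Turing machine halts on input }e\}$, whose non-recursiveness is the classical unsolvability of the halting problem. Applying the MRDP theorem to $S$ yields a single polynomial $P(x_1,x_2,\ldots,x_m)$ with integer coefficients such that, for every positive integer $a$, the equation $P(a,x_2,\ldots,x_m)=0$ has a solution in nonnegative integers $x_2,\ldots,x_m$ precisely when $a\in S$. If the representation furnished by MRDP has $x_1$ ranging over all of $\N$ rather than over the positive integers, I would simply pre-compose with $x_1\mapsto x_1-1$, or equivalently ignore the value $a=0$; deleting one element changes neither recursive enumerability nor recursiveness.

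The reduction is then immediate: suppose for contradiction that some algorithm $\mathcal B$ decides, given a positive integer $a$, whether $P(a,x_2,\ldots,x_m)=0$ is solvable over the nonnegative integers. By the equivalence just described, $\mathcal B$ decides membership in $S$, contradicting the non-recursiveness of $S$. Hence no such algorithm exists, which is exactly the assertion of the theorem.

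The one genuinely hard ingredient is the MRDP theorem itself --- the proof that exponentiation, and consequently every recursively enumerable relation, is Diophantine (through Pell-type equations, Matiyasevich's coding of the Fibonacci recurrence, and the elimination of bounded universal quantifiers) --- which is far beyond the scope of this note and which I would simply cite, just as the statement of Theorem~\ref{the:Hilbert10} cites \cite{Rozenberg-Salomaa}. A minor point I would make explicit is that the choice of \emph{nonnegative} integer unknowns is harmless: by Lagrange's four-square theorem any Diophantine condition over $\N$ can be rephrased as one over $\Z$, and conversely, so the two formulations are interchangeable. With MRDP assumed, everything else is the routine diagonal argument above.
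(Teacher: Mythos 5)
Your argument is correct: the reduction from a non-recursive recursively enumerable set via the MRDP theorem, with the halting set as witness, is exactly the standard derivation of this statement, and the paper itself offers no proof at all --- it simply cites Theorem 3.20 of \cite{Rozenberg-Salomaa}, whose proof is the argument you give. The only substantive content beyond the citation is the MRDP theorem, which you correctly isolate as the one deep ingredient to be taken on faith; your remarks on passing between $\N$ and $\Z$ and on discarding $a=0$ are accurate housekeeping.
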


For $k=2,3,\ldots$, define the Cantor's polynomials $C_2,C_3,\ldots$ as follows:
\begin{eqnarray*}
C_2(x_1,x_2)&=&\frac 12 (x_1+x_2)(x_1+x_2+1)+x_2,\\
C_{k+1}(x_1,\ldots,x_{k+1})&=&C_2(C_k(x_1,\ldots,x_k),x_{k+1}).
\end{eqnarray*}
These polynomials are injective on $\N^k$. In other words,
for all nonnegative integers $n_1,\ldots,n_k,m_1\ldots, m_k$, 
if $C_k(n_1,\ldots,n_k)=C_k(m_1,\ldots,m_k)$ then $n_1=m_1,\ldots,n_k=m_k$.
Note that the $C_k$'s are not injective on $\Z^k$.

Let $P(x_1,\ldots,x_m)$ be as in Theorem~\ref{the:Hilbert10}. Take a new indeterminate $x_{m+1}$ and define the polynomial 
$Q(x_1,\ldots,x_m,x_{m+1})$ by
\[Q(x_1,\ldots,x_m,x_{m+1})=e\cdot
C_{m+1}(x_1,\ldots,x_m,P(x_1,\ldots,x_m)^2\cdot x_{m+1}),\]
where $e$ is a positive integer chosen such that $Q$ has integer coefficients.

\begin{lemma}\label{lem:PQ}
Let $a$ be a positive integer. Then the equation $P(a,x_2,\ldots,x_m)=0$ 
has a solution in nonnegative integers if and only if 
there exist nonnegative integers $b_2,\ldots,b_{m+1},c_2,\ldots,c_{m+1}$ such that
\begin{equation}\label{lem:eq1}
Q(a,b_2,\ldots,b_{m+1})=Q(a,c_2,\ldots,c_{m+1})
\end{equation}
and
\begin{equation}\label{lem:eq2}
(b_2,\ldots,b_{m+1})\neq (c_2,\ldots,c_{m+1}).
\end{equation}
\end{lemma}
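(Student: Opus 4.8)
The plan is to exploit the specific shape of $Q$, namely that it is (a positive integer multiple of) a Cantor polynomial $C_{m+1}$ whose last argument is $P(a,x_2,\ldots,x_m)^2\cdot x_{m+1}$. Since $e$ is a nonzero constant, equation~\eqref{lem:eq1} is equivalent to $C_{m+1}(a,b_2,\ldots,b_m,P(a,b_2,\ldots,b_m)^2 b_{m+1})=C_{m+1}(a,c_2,\ldots,c_m,P(a,c_2,\ldots,c_m)^2 c_{m+1})$. Because the arguments here are all nonnegative integers and $C_{m+1}$ is injective on $\N^{m+1}$, this equality holds if and only if $a=a$ (automatic), $b_i=c_i$ for $i=2,\ldots,m$, and $P(a,b_2,\ldots,b_m)^2 b_{m+1}=P(a,c_2,\ldots,c_m)^2 c_{m+1}$. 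So after using injectivity I am reduced to the following: $b_i=c_i$ for $2\le i\le m$, and writing $v=P(a,b_2,\ldots,b_m)=P(a,c_2,\ldots,c_m)$, the condition $v^2 b_{m+1}=v^2 c_{m+1}$ together with $(b_{m+1},\ldots)\neq(c_{m+1},\ldots)$, which given $b_i=c_i$ for $i\le m$ forces $b_{m+1}\neq c_{m+1}$.

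Now I would split on whether $v=0$. If there exist nonnegative integers $b_2,\ldots,b_m$ with $v=P(a,b_2,\ldots,b_m)=0$, then set $c_i=b_i$ for $2\le i\le m$ and pick $b_{m+1}=0$, $c_{m+1}=1$: then $v^2 b_{m+1}=0=v^2 c_{m+1}$ and $b_{m+1}\neq c_{m+1}$, so~\eqref{lem:eq1} and~\eqref{lem:eq2} hold. Conversely, if~\eqref{lem:eq1} and~\eqref{lem:eq2} hold, then by the reduction above $b_i=c_i$ for $i\le m$, $b_{m+1}\neq c_{m+1}$, and $v^2(b_{m+1}-c_{m+1})=0$ with $b_{m+1}-c_{m+1}\neq 0$, hence $v=0$, i.e. $P(a,b_2,\ldots,b_m)=0$ has a nonnegative integer solution. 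This establishes both directions.

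The one point that needs care — and is the only real obstacle — is the appeal to injectivity of $C_{m+1}$: it is injective on $\N^{m+1}$ but, as the excerpt explicitly warns, not on $\Z^{m+1}$. Here this is not a problem because every argument fed into $C_{m+1}$ in the definition of $Q(a,b_2,\ldots,b_{m+1})$ is a nonnegative integer: $a$ is a positive integer, the $b_i$ are nonnegative integers by hypothesis, and $P(a,b_2,\ldots,b_m)^2\cdot b_{m+1}$ is a product of a square of an integer with a nonnegative integer, hence nonnegative. So the tuples $(a,b_2,\ldots,b_m,P(a,\mathbf b)^2 b_{m+1})$ and $(a,c_2,\ldots,c_m,P(a,\mathbf c)^2 c_{m+1})$ lie in $\N^{m+1}$, and injectivity on $\N^{m+1}$ applies. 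This is precisely why $P$ was squared in the definition of $Q$: it guarantees nonnegativity of the last coordinate regardless of the sign of $P(a,\mathbf b)$, so that the Cantor pairing stays inside the region where it is injective. I would state this nonnegativity observation explicitly before invoking injectivity, and the rest is the short case analysis above.
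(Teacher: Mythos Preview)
Your proof is correct and follows essentially the same approach as the paper: both directions hinge on the injectivity of $C_{m+1}$ on $\N^{m+1}$ to reduce \eqref{lem:eq1} to $b_i=c_i$ for $2\le i\le m$ together with $P(a,\mathbf b)^2 b_{m+1}=P(a,\mathbf c)^2 c_{m+1}$, after which the case split on whether $P$ vanishes is immediate. Your explicit remark that the squaring of $P$ is precisely what keeps the last coordinate nonnegative (so that injectivity on $\N^{m+1}$ applies) is a point the paper leaves implicit.
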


\begin{proof}
Suppose first that there exist $d_2,\ldots,d_m\in\N$ such that 
\[P(a,d_2,\ldots,d_m)=0.\]
Then we have
\[Q(a,d_2,\ldots,d_m,x) =e\cdot C_{m+1}(a,d_2,\ldots,d_m,0)\]
for any $x\in\N$. Hence, if we choose
\[(b_2,\ldots,b_{m+1})=(d_2,\ldots,d_m,1)\ \text{ and }\  (c_2,\ldots,c_{m+1})=(d_2,\ldots,d_m,2),\]
then \eqref{lem:eq1} and \eqref{lem:eq2} hold.

Suppose then that $P(a,d_2,\ldots,d_m)\neq 0$ for all $d_2,\ldots,d_m\in\N$. Suppose that 
\[Q(a,b_2,\ldots,b_{m+1})=Q(a,c_2,\ldots,c_{m+1})\]
where $b_2,\ldots,b_{m+1},c_2,\ldots,c_{m+1}\in\N$. Hence
\[C_{m+1}(a,b_2,\ldots,b_m,P(a,b_2,\ldots,b_m)^2b_{m+1})=C_{m+1}(a,c_2,\ldots,c_m,P(a,c_2,\ldots,c_m)^2 c_{m+1}).\]
Because $C_{m+1}$ is injective on $\N^{m+1}$ we obtain
\begin{equation}\label{lem:eq3}
b_2=c_2,\ldots,b_m=c_m
\end{equation}
and
\[P(a,b_2,\ldots,b_m)^2b_{m+1}=P(a,c_2,\ldots,c_m)^2c_{m+1}.\]
Using \eqref{lem:eq3} and the assumption 
\[P(a,b_2,\ldots,b_m)= P(a,c_2,\ldots,c_m)\neq 0,\]
we obtain $b_{m+1}=c_{m+1}$. Consequently, if $P(a,x_2,\ldots,x_m)=0$ does not
have a solution in nonnegative integers,
then there does not exist $b_2,\ldots,b_{m+1},c_2,\ldots,c_{m+1}\in\N$ such that \eqref{lem:eq1} and \eqref{lem:eq2} hold.
\end{proof}

We are now ready for the proof of Theorem~\ref{the:main2}.

Let $P(x_1,\ldots,x_m)$ and $Q(x_1,\ldots,x_{m+1})$ be as above.
By Lemma~\ref{lem:goal} there is a positive integer $k$ and a morphism
$\mu\colon\Delta^*\to\matkZ$ such that
\[\mu(z_1x^{a_1}yx^{a_2}y\cdots yx^{a_{m+1}}z_2)=Q(a_1,\ldots,a_{m+1})E_k\]
%\left( \begin{array}{cccc}
%        0 & \cdots & 0 & Q(a_1,\ldots,a_{m+1}) \\
%        0 & \cdots & 0 &  0 \\
%        \vdots &  &\vdots & \vdots \\
%        0 & \cdots & 0 & 0 \\
%        \end{array} \right)\]
for all $a_1,\ldots,a_{m+1}\in\N$.
For each $a\in\N$ define the morphism $\mu_a\colon\Delta^*\to\matkZ$ by
\[\mu_a(z_1)=\mu(z_1x^ay),\ \mu_a(x)=\mu(x),\ \mu_a(y)=\mu(y) \text{ and }\mu_a(z_2)=\mu(z_2).\]
Then
\[\mu_a(z_1x^{a_2}y\cdots yx^{a_{m+1}}z_2)=Q(a,a_2,\ldots,a_{m+1})E_k\]
for any $a\ge 1$ and $a_2,\ldots,a_{m+1}\in \N$.
By Lemma~\ref{lem:PQ}, for any $a\ge 1$, the morphism $\mu_a$ is injective on
$K_m$ if and only if the equation $P(a,x_2\ldots,x_m)=0$
does not have a solution in nonnegative integers. Now Theorem~\ref{the:main2} follows by
Theorem~{\ref{the:Hilbert10}}.

\section{Concluding remarks}

In the proof of our undecidability result we used singular matrices. On the
other hand, in Theorem~\ref{the:main1} we require that $\mu(z_i)$ is
nonsingular for $i=1,\ldots,t+1$. This assumption plays an essential role in
our proof of the theorem. At present we do not know how to avoid using this
assumption.

The following examples illustrate the situations where some of the matrices
$\mu(z_i)$, $1\le i\le t+1$, are singular. The first two examples show that
the singularity of some $\mu(z_i)$ often implies that $\mu$ is not injective while
the third example shows that this is not always the case. In these examples we
use the notations of Section 3.

\begin{example}
Let $t\ge 2$ and assume that there is an integer $i$, $1\le i \le t-1$, such
that $N_i$ is of the form
$\left( \begin{array}{cc}
        0 & B\\
        0 & C
        \end{array} \right)$,
where $B,C\in \Q$. Then
\[N_iMN_{i+1}=N_iN_{i+1}M,\]
which implies that $\mu$ is not injective on $L_t$.
\end{example}

\begin{example}
Let $t\ge 2$ and assume that there is an integer $i$, $3\le i \le t+1$, such
that $N_i$ is of the form
$\left( \begin{array}{cc}
        A & B\\
        0 & 0
        \end{array} \right)$,
where $A,B\in \Q$. Then
\[MN_{i-1}N_i=N_{i-1}MN_i,\]
which implies that $\mu$ is not injective on $L_t$.
\end{example}

\begin{example}
Let $t\ge 1$ and let
\[N_1=N_2=\cdots =N_t=\left( \begin{array}{cc}
        3 & 1\\
        0 & 1
        \end{array} \right),\,
N_{t+1}=\left( \begin{array}{cc}
        0 & 1\\
        0 & 1
        \end{array} \right),\,
M=\left( \begin{array}{cc}
        3 & 0\\
        0 & 1
        \end{array} \right) .\]
Then for any $m_1,\ldots,m_t\ge 0$ we have
\[N_1M^{^{m_1}}N_2M^{^{m_2}}N_3 \ldots N_tM^{^{m_t}}N_{t+1}=\left( \begin{array}{cc}
        0 & E\\
        0 & 1
        \end{array} \right) \]
where
\[E=3^{m_1+\cdots +m_{t}+t}+3^{m_1+\cdots +m_{t-1}+t-1}+\cdots+3^{m_1+m_2+2}+3^{m_1+1}+1.\]
This implies that $\mu$ is injective on $L_t$.
\end{example}

\section{Acknowledgement}
The major part of this work was achieved when the first author was a member 
of the FiDiPro group of the  FUNDIM research center at the University of Turku. 
The authors warmly thank Juhani Karhum\"aki and Luca Zamboni for their support.

\bibliography{bibliographie}
\bibliographystyle{alpha}
\end{document}